\newcommand\norm[1]{\left\lVert#1\right\rVert}
\newtheorem{lemma}{Lemma}
\begin{document}
%
% paper title
% can use linebreaks \\ within to get better formatting as desired
\title{Semidefinite Relaxation-Based PAPR-Aware Precoding for Massive MIMO-OFDM Systems}
%
%
% author names and IEEE memberships
% note positions of commas and nonbreaking spaces ( ~ ) LaTeX will not break
% a structure at a ~ so this keeps an author's name from being broken across
% two lines.
% use \thanks{} to gain access to the first footnote area
% a separate \thanks must be used for each paragraph as LaTeX2e's \thanks
% was not built to handle multiple paragraphs
%

\author{Miao~Yao,
		Matt~Carrick,
        Munawwar~M.~Sohul,
        Vuk~Marojevic,
        Cameron~D.~Patterson,
        and~Jeffrey~H.~Reed,~\IEEEmembership{Fellow,~IEEE}% <-this % stops a space
\thanks{Miao Yao, Matt Carrick, Munawwar M. Sohul, Vuk Marojevic,  Cameron D. Patterson, and Jeffrey H. Reed are with the Department
of Electrical and Computer Engineering, Virginia Tech, Blacksburg,
VA, 24060 USA e-mail: (reedjh@vt.edu). This work is supported by NSF grant: CNS-1228903.}% <-this % stops a space% <-this % stops a space
}

% note the % following the last \IEEEmembership and also \thanks - 
% these prevent an unwanted space from occurring between the last author name
% and the end of the author line. i.e., if you had this:
% 
% \author{....lastname \thanks{...} \thanks{...} }
%                     ^------------^------------^----Do not want these spaces!
%
% a space would be appended to the last name and could cause every name on that
% line to be shifted left slightly. This is one of those "LaTeX things". For
% instance, "\textbf{A} \textbf{B}" will typeset as "A B" not "AB". To get
% "AB" then you have to do: "\textbf{A}\textbf{B}"
% \thanks is no different in this regard, so shield the last } of each \thanks
% that ends a line with a % and do not let a space in before the next \thanks.
% Spaces after \IEEEmembership other than the last one are OK (and needed) as
% you are supposed to have spaces between the names. For what it is worth,
% this is a minor point as most people would not even notice if the said evil
% space somehow managed to creep in.

% The paper headers
\markboth{}%
{Shell \MakeLowercase{\textit{et al.}}: Bare Demo of IEEEtran.cls for Journals}
% The only time the second header will appear is for the odd numbered pages
% after the title page when using the twoside option.
% 
% *** Note that you probably will NOT want to include the author's ***
% *** name in the headers of peer review papers.                   ***
% You can use \ifCLASSOPTIONpeerreview for conditional compilation here if
% you desire.

% If you want to put a publisher's ID mark on the page you can do it like
% this:
%\IEEEpubid{0000--0000/00\$00.00~\copyright~2007 IEEE}
% Remember, if you use this you must call \IEEEpubidadjcol in the second
% column for its text to clear the IEEEpubid mark.

% use for special paper notices
%\IEEEspecialpapernotice{(Invited Paper)}

\makeatletter
\def\ps@IEEEtitlepagestyle{
  \def\@oddfoot{\mycopyrightnotice}
  \def\@evenfoot{}
}
\def\mycopyrightnotice{
  {\footnotesize
  \begin{minipage}{\textwidth}
  \centering
  Copyright~\copyright~2015 IEEE. Personal use of this material is permitted. However, permission to use this  \\ 
  material for any other purposes must be obtained from the IEEE by sending a request to pubs-permissions@ieee.org.
  \end{minipage}
  }
}

% make the title area
\maketitle

\begin{abstract}
%\boldmath
Massive MIMO requires a large number of antennas and the same amount of power amplifiers (PAs), one per antenna. As opposed to 4G base stations, which could afford highly linear PAs, next-generation base stations will need to use inexpensive PAs, which have a limited region of linear amplification. One of the research challenges is effectively handling signals which have high peak-to-average power ratios (PAPRs), such as orthogonal frequency division multiplexing (OFDM). This paper introduces a PAPR-aware precoding scheme that exploits the excessive spatial degrees-of-freedom of large scale multiple-input multiple-output (MIMO) antenna systems. This typically requires finding a solution to a nonconvex optimization problem. Instead of relaxing the problem to minimize the peak power, we introduce a practical semidefinite relaxation (SDR) framework that enables accurately and efficiently approximating the theoretical PAPR-aware precoding performance for OFDM-based massive MIMO systems. The framework allows incorporating channel uncertainties and intercell coordination. Numerical results show that several orders of magnitude improvements can be achieved w.r.t. state of the art techniques, such as instantaneous power consumption reduction and multiuser interference cancellation. 
The proposed PAPR-aware precoding can be effectively handled along with the multi-cell signal processing by the centralized baseband processing platforms of next-generation radio access networks. Performance can be traded for the computing efficiency for other platforms.
\end{abstract}
%\keywords{galaxies --- stars: individual (VV Pup)} 
% IEEEtran.cls defaults to using nonbold math in the Abstract.
% This preserves the distinction between vectors and scalars. However,
% if the journal you are submitting to favors bold math in the abstract,
% then you can use LaTeX's standard command \boldmath at the very start
% of the abstract to achieve this. Many IEEE journals frown on math
% in the abstract anyway.
\begin{IEEEkeywords}
Massive MIMO, OFDM, Semidefinite Relaxation, PAPR Reduction, Green Communications.
\end{IEEEkeywords}
% Note that keywords are not normally used for peerreview papers.

% Note that IEEE does not put floats in the very first column - or typically
% anywhere on the first page for that matter. Also, in-text middle ("here")
% positioning is not used. Most IEEE journals/conferences use top floats
% exclusively. Note that, LaTeX2e, unlike IEEE journals/conferences, places
% footnotes above bottom floats. This can be corrected via the \fnbelowfloat
% command of the stfloats package.

\section{Introduction}

\IEEEPARstart{T}{he}  massive deployments of heterogeneous wireless networks and the emerging 5G new radio (NR) have motivated the demand for energy and spectrum efficiency to reduce operational costs. Orthogonal frequency division multiplexing (OFDM) has been adopted by the 4G long-term evolution (LTE) \cite{access2014physical}. OFDM suffers from the drawback of high peak-to-average power ratio (PAPR) at the transmitter \cite{ochiai2000performance,ren2003complementary,slimane2007reducing}. The high signal peaks which are generated by the constructive addition of different subcarriers lead to the signal excursions into the nonlinear region of the power amplifier (PA). Typically, to avoid nonlinear signal distortion, the input power to the PA is reduced or ''backed-off''. Operating at lower power levels reduces the power efficiency and increases the operational expenditures. This is not a realistic solution for 5G NR networks since the target energy efficiency improvement is 100x w.r.t. 4G deployments\cite{ITU21meeting}. Large scale multiple-input multiple-output (MIMO) (a.k.a. massive MIMO) systems have been proposed as the key enabler of 5G due to their ability to mitigate the multiuser interference (MUI) and improve link reliability and spectrum efficiency \cite{ngo2013energy}. Although highly-linear PAs are desirable, their elevated costs lead to capital expenditure that scales linearly with the number of antennas and is, thus, prohibitively expensive for massive MIMO base station (BS) deployments. Massive MIMO systems have the potential to reduce the PAPR on a symbol basis by exploiting the extra spatial degrees-of-freedom (DoFs) and hence adopt less expensive PAs   \cite{mohammed2013per,zhang2016per,pan2014constant,chen2016low,studer2013aware,cha2014generalized,bao2016efficient}.

A semidefinite relaxation (SDR)-based PAPR-aware precoding for massive MIMO-OFDM systems is proposed in this paper. We formulate the instantaneous transmit power minimization subject to PAPRs and MUI thresholds. It provides a mechanism for trading MUI, PAPR, and transmit power. This problem is a nonconvex quadratically constrained quadratic programming (QCQP) problem. We employ the SDR and then the rank reduction method to obtain an estimate of the QCQP. We leverage the SDR approach of \cite{huang2010rank,wiesel2005semidefinite,mobasher2007near,wang2011papr,zhang2014papr,ma2004semidefinite,ma2002quasi,goemans1995improved,sidiropoulos2006semidefinite} and apply a randomization scheme, as proposed in \cite{ma2002quasi,ma2004semidefinite,yao2017sustainable,sidiropoulos2006semidefinite} , to achieve PAPR-aware precoding solution. As a powerful optimization technique, SDR has recently been applied to solve a variety of nonconvex or NP-hard problems in communications. Reference \cite{huang2010rank} minimizes the transmit power of a beamforming problem while keeping the interference generated by other coexisting systems under a tolerable level using SDR. Maximum-likelihood detection in MIMO M-ary phase-shift keying (M-PSK) communication systems, is an NP-hard least squares search problem, which is also approximated by SDR \cite{ma2004semidefinite}. The application of SDR in MIMO detection was later extended to 16-quadrature amplitude modulation (16-QAM) \cite{wiesel2005semidefinite} and to higher-order QAM alphabets for general QAM constellations \cite{mobasher2007near}. In addition, SDR has been applied to PAPR reduction of the single antenna system in \cite{wang2011papr,zhang2014papr}. The basis of our solution is expanding the feasible solution sets yielding an attractive polynomial-time approximation. Improvements brought by the traditional PAPR reduction techniques trade effective transmission rate or spectral purity for improving power efficiency. This paper applies a PAPR-aware precoding scheme which takes advantage of excessive DoFs in massive MIMO to reduce PAPR while preserving transmission rate and out-of-band emissions.

Researchers have addressed the PAPR issue of OFDM. The proposed solutions introduce signal distortion or redundancy to improve efficiency \cite{ochiai2000performance,ochiai2002performance,ren2003complementary,slimane2007reducing,gazor2012tone,aggarwal2006minimizing,alavi2005papr,chen2010partial}. One of the simplest and most extensively used schemes is clipping and filtering \cite{ochiai2000performance,ochiai2002performance,ren2003complementary} which limits the PAPR below a threshold level, but causes both subcarrier inter-modulation and out-of-band radiation. The coding scheme proposed in \cite{slimane2007reducing} transforms the subcarriers into codewords with low PAPR, but reduces spectrum efficiency, especially when the number of subcarriers is large. Tone reservation \cite{gazor2012tone} and constellation error shaping \cite{aggarwal2006minimizing} avoid sending data on a small subset of subcarriers and extend outer constellation points to minimize the PAPR of the OFDM symbols, degrading spectrum efficiency. The selective mapping (SLM) scheme \cite{breiling2001slm} is realized by multiplying the input symbol sequence to select alternative input sequences; but this requires side information to recover the signal at the receiver. The partial transmit sequence (PTS) scheme \cite{alavi2005papr,chen2010partial} partitions the input symbol sequence into a variety of disjoint symbol subsequences and also needs additional side information. Unlike to the existing PAPR reduction approaches, the proposed PAPR-aware precoding solution does not sacrifice the transmission rate nor does it affect the spectral purity.

There are two main kinds of PAPR-aware precoding schemes exploiting excess DoFs in massive MIMO systems: constant envelope precoding and multiuser (MU) precoding. Constant envelope precoding \cite{mohammed2013per,zhang2016per,pan2014constant} achieves a relatively flat amplitude transmit signal envelope by using phase modulation. Moreover, the single-user constant envelope precoder in \cite{pan2014constant} is realized by unequal per-antenna power allocation to facilitate efficient precoding. Multiuser precoding \cite{mohammed2013per,zhang2016per} exploit spatial DoFs and enable efficient per-antenna envelope transmission with nonlinear RF components. Besides, PAPR-aware MU precoding in massive MIMO-OFDM systems jointly minimizes the MUI and peak power of the signal \cite{studer2013aware,cha2014generalized,bao2016efficient}. However, the minimization of peak power does not necessarily minimize the PAPR. In this paper, we take the PAPR of the transmit signal into account for optimization while keeping the MUI below a predetermined threshold.

Robust optimization of wireless communication systems has been extensively studied with imperfect channel knowledge \cite{vucic2009robust,tajer2011robust,shen2012distributed,zheng2009robust,wang2011robust,gharavol2010robust,rong2006robust,chalise2007robust,wang2011probabilistic,chung2011probabilistic}. It is not realistic to assume perfect channel state information (CSI) at the BS especially for densely-deployed, highly-mobilized 5G access channels. The imperfect CSI may be caused by inaccurate channel estimation over time and frequency, quantization errors, or offsets between reciprocal channels in time or frequency. Imperfect CSI and its impact on massive MIMO performance can be modeled as either bounded or stochastic errors. Bounded-based error robust MIMO beamforming was studied for broadcasting channels \cite{vucic2009robust}, multi-cell systems \cite{tajer2011robust,shen2012distributed},
and cognitive radio systems \cite{zheng2009robust,wang2011robust,gharavol2010robust} where the CSI errors are bounded. As a less conservative approach, stochastic robust beamforming was studied in \cite{rong2006robust,chalise2007robust,wang2011probabilistic,chung2011probabilistic} which assume the CSI errors are normally distributed. We consider imperfect CSI in both the bounded and stochastic sense in our robust PAPR-aware precoding framework and develop solutions under these uncertainties.

Instead of considering independent processing in each cell, multicell processing based on cooperation between BSs has emerged as a promising solution for suppressing co-channel interference \cite{huang2011distributed}. The optimal intercell coordination requires coherence between the signals from different BSs and the transmissions are controlled in a centralized manner. It has been shown that the coordination between neighboring BSs can improve capacity \cite{sawahashi2010coordinated,choi2008capacity,somekh2009cooperative}. This paper explores various formulations of the PAPR-aware precoding problem in a multicell context including the important cases of intercell coordination. 

The outline and contributions of this paper are as follows:
\begin{enumerate}
\item Basic PAPR-aware precoding optimization: Section III formulates an optimization framework to minimize instantaneous transmit power with the assumption of perfect CSI to achieve a predefined PAPR and MUI at the transmitter and receiver, respectively. To circumvent the nonconvexity, we \textit{approximate the solution using SDR and apply rank reduction to derive the rank-1 optimal solution}.
\item Robust PAPR-aware precoding optimization: Section IV incorporates both bounded and statistical CSI errors into a more realistic PAPR-aware precoding design by considering robust optimization techniques for \textit{coarse robust precoding, fine robust precoding via S-procedure and fine robust precoding via Bernstein-type inequality}. We show that the PAPR-aware robust precoding problems that incorporate the channel uncertainties can be formulated as SDR problems and efficiently solved.
\item PAPR-aware precoding for intercell coordination: Section V demonstrates the importance of PAPR-aware precoding to serve cell-edge users in three typical scenarios of intercell coordination: \textit{coherent transmission, fast cell selection, and interference coordination}. Using SDR, we achieve important reductions in cell-edge interference. We compare the computational complexities between the baseline and a variety of proposed approaches to illustrate the efficiency of our contribution.
\end{enumerate}

Section II provides the system model and Section VI and VII the simulation results and conclusions.
 
\textbf{Notation}: We denote vectors by boldface lowercase letters, e.g. $\mathbf{y}$, and matrices by boldface uppercase letters, e.g. $\mathbf{Y}$. The $i$th component of a vector $\mathbf{y}$ is $y_i$. Given two matrices $\mathbf{A}$ and $\mathbf{B}$, $\mathbf{A}\succ\mathbf{B}$ ($\mathbf{A}\succeq\mathbf{B}$) means that $\mathbf{A}-\mathbf{B}$ is positive definite (semidefinite). $\mathbb{C}$, $\mathbb{R}$ and $\mathbb{H}$ denote the complex, real and Hermitian sets, $\{\cdot\}^H$ the Hermitian transpose, $\textbf{E}(\cdot)$ the expectation, $\mathbf{0}_M$ the $M\times M$ zero matrix, $\mathbf{I}_M$ the $M\times M$ identity matrix, $\norm{x}_2^2$ the Euclidean norm, $\norm{x}_\infty$ the infinity norm, $\text{Tr}\{\cdot\}$ the trace of a matrix, $\Re\{\cdot\}$ the real part of a matrix, $\Im\{\cdot\}$ the image part of a matrix, $\text{diag}\{\cdot\}$ the block diagonal matrix whose diagonal elements are matrices, and $\otimes$ the Kronecker product. 
\begin{figure*}[!b]
% ensure that we have normalsize text
\normalsize
\hrulefill
\begin{equation}
\mathbf{s}=\underbrace{\begin{bmatrix}
    \mathbf{H}_1 & \mathbf{0} & \dots  & \mathbf{0} \\
    \mathbf{0} & \mathbf{H}_2   & \dots  & \mathbf{0} \\
    \vdots & \vdots & \ddots & \vdots \\
    \mathbf{0} & \mathbf{0} & \dots  & \mathbf{H}_{N_c}
\end{bmatrix}}_{\triangleq\mathbf{H}}\underbrace{\begin{bmatrix}
    \mathbf{I}_{1,1} & \mathbf{I}_{1,2} & \dots  & \mathbf{I}_{1,N_t} \\
    \mathbf{I}_{2,1} & \mathbf{I}_{2,2}   & \dots  & \mathbf{I}_{2,N_t} \\
    \vdots & \vdots & \ddots & \vdots \\
    \mathbf{I}_{N_c,1} & \mathbf{I}_{N_c,2} & \dots  & \mathbf{I}_{N_c,N_t}
\end{bmatrix}}_{\triangleq\bar{\mathbf{P}}}
\underbrace{\begin{bmatrix}
    \mathbf{Q} & \mathbf{0} & \dots  & \mathbf{0} \\
    \mathbf{0} & \mathbf{Q}   & \dots  & \mathbf{0} \\
    \vdots & \vdots & \ddots & \vdots \\
    \mathbf{0} & \mathbf{0} & \dots  & \mathbf{Q}
\end{bmatrix}}_{\triangleq\bar{\mathbf{Q}}}
\underbrace{\left(
\begin{array}{c}
{\mathbf{x}}_1\\
{\mathbf{x}}_2\\
\vdots\\
{\mathbf{x}}_{N_t}
\end{array}
\right)}_{{\mathbf{x}}}\label{eq_shpqx},
\end{equation}
% Restore the current equation number.
% IEEE uses as a separator
% The spacer can be tweaked to stop underfull vboxes.
\vspace*{4pt}
\end{figure*}
\section{System Model and Problem Formulation}
\label{sec:sysmodel}
The purpose of the PAPR-aware massive MIMO-OFDM downlink precoding is to output a transmit signal that meets the predetermined PAPR and MUI targets. In order to quantify MUI, the overall massive MIMO-OFDM downlink precoding constraint needs to be formulated. Consider the downlink massive MIMO system of Fig. \ref{fig_block_sdr} which has $M_r$ single antenna users and one BS equipped with $N_t$ antennas. The number of BS antennas is significantly larger than the number of simultaneous users, $N_t\gg M_r$. 
\begin{figure}[h!]
\centering
\includegraphics[width=3.5in]{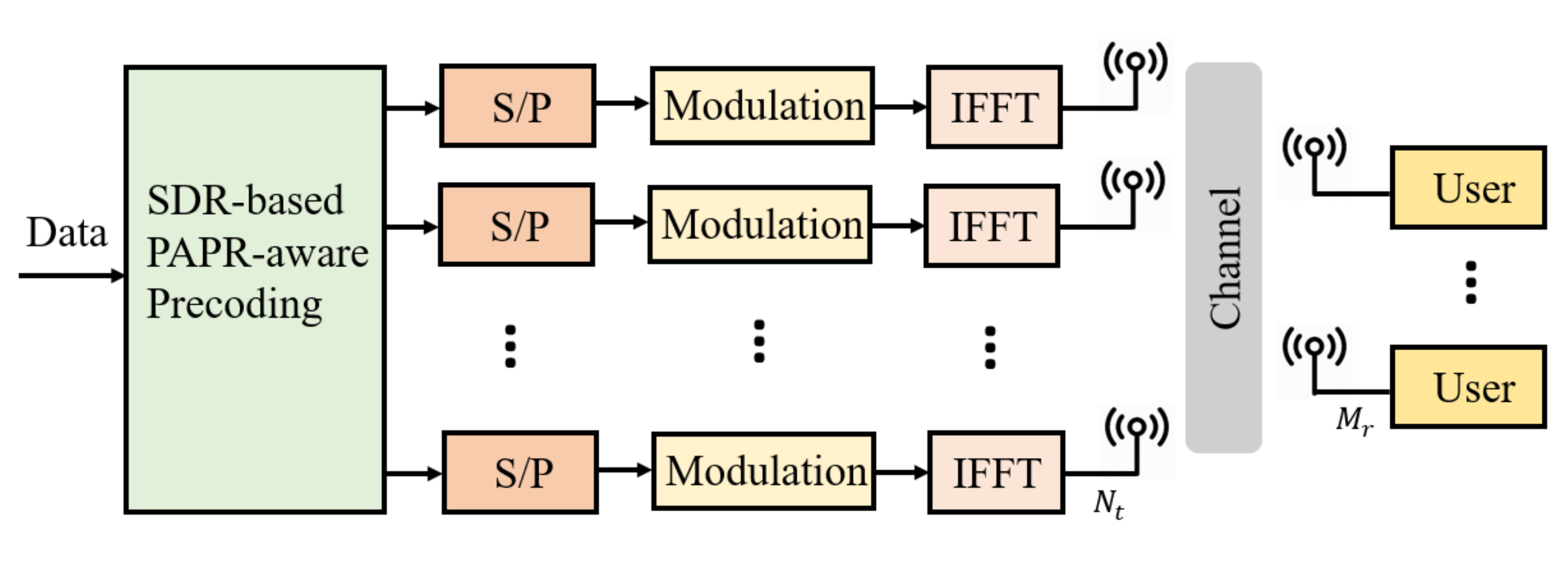}
\caption{System diagram of the proposed PAPR-aware downlink massive MIMO-OFDM system based on SDR.}
\label{fig_block_sdr}
\end{figure}
To illustrate the excess DoFs, enabled by the large scale transmit antenna array, we start with a single-carrier multiuser MIMO system and then extend it to the case of $N_c$ subcarriers. The single-carrier received signal $\dot{\mathbf{y}}\in\mathbb{C}^{M_r\times 1}$ can be represented as
\begin{equation}\nonumber
\dot{\mathbf{y}}=\dot{\mathbf{s}}+\left(\dot{\mathbf{H}}\dot{\mathbf{X}}-\dot{\mathbf{s}}\right)+\dot{\mathbf{w}},
\end{equation}
where $\dot{\mathbf{s}}\in\mathbb{C}^{M_r\times 1}$ represents the complex constellation before precoding, $\dot{\mathbf{H}}\in\mathbb{C}^{M_r\times N_t}$ the flat fading channel coefficients where the $(m,n)$ element represents the complex Gaussian channel tap between the $m$th user and $n$th BS antenna, $\dot{\mathbf{X}}\in\mathbb{C}^{N_t\times 1}$ the single-carrier transmit signals at the antennas, $\dot{\mathbf{w}}\in\mathbb{C}^{M_r\times 1}$ the additive white Gaussian noise, and $(\dot{\mathbf{H}}\dot{\mathbf{X}}-\dot{\mathbf{s}})$ the MUI. The entries of $\dot{\mathbf{H}}$ and $\dot{\mathbf{w}}$ are independent and identically distributed. The following precoding constraint must be satisfied to eliminate the MUI:
\begin{equation}\nonumber
\dot{\mathbf{s}}=\dot{\mathbf{H}}\dot{\mathbf{X}}.
\end{equation}

The channel matrix is underdetermined since we assume the number of BS antennas is significantly larger than the number of users. However, the formulation of an OFDM waveform is more complicated than the single-carrier case since PAPR is related to constructive and destructive addition of multiple carriers in the time domain, whereas precoding is related to the desired spatial characteristics. The overall problem of multicarrier precoding across different antennas and subcarriers is formulated in Section \ref{sec_precoding_form}.

\subsection{Multicarrier Frequency-Space Precoding Formulation}
\label{sec_precoding_form}
Assuming perfect channel state information is available at the transmitter (CSIT), linear precoding can be applied at the BS to eliminate the MUI at the receiver. Zero forcing (ZF) precoding has been widely applied in massive MIMO due to its simplicity and exceptional performance \cite{studer2013aware}. By applying ZF precoding for the $m$th subcarrier, the transmit symbols satisfy $\mathbf{s}_m=\mathbf{H}_m\mathbf{X}_m(1\leq m\leq N_c)$. In order to fully exploit the DoFs of the large scale array, the user information symbols at the BS need to be mapped to the transmit antennas so that the information symbols received by each user experience no interference from the other users \cite{studer2013aware,bao2016efficient}. 

\begin{figure}[h!]
\centering
\includegraphics[width=3.5in]{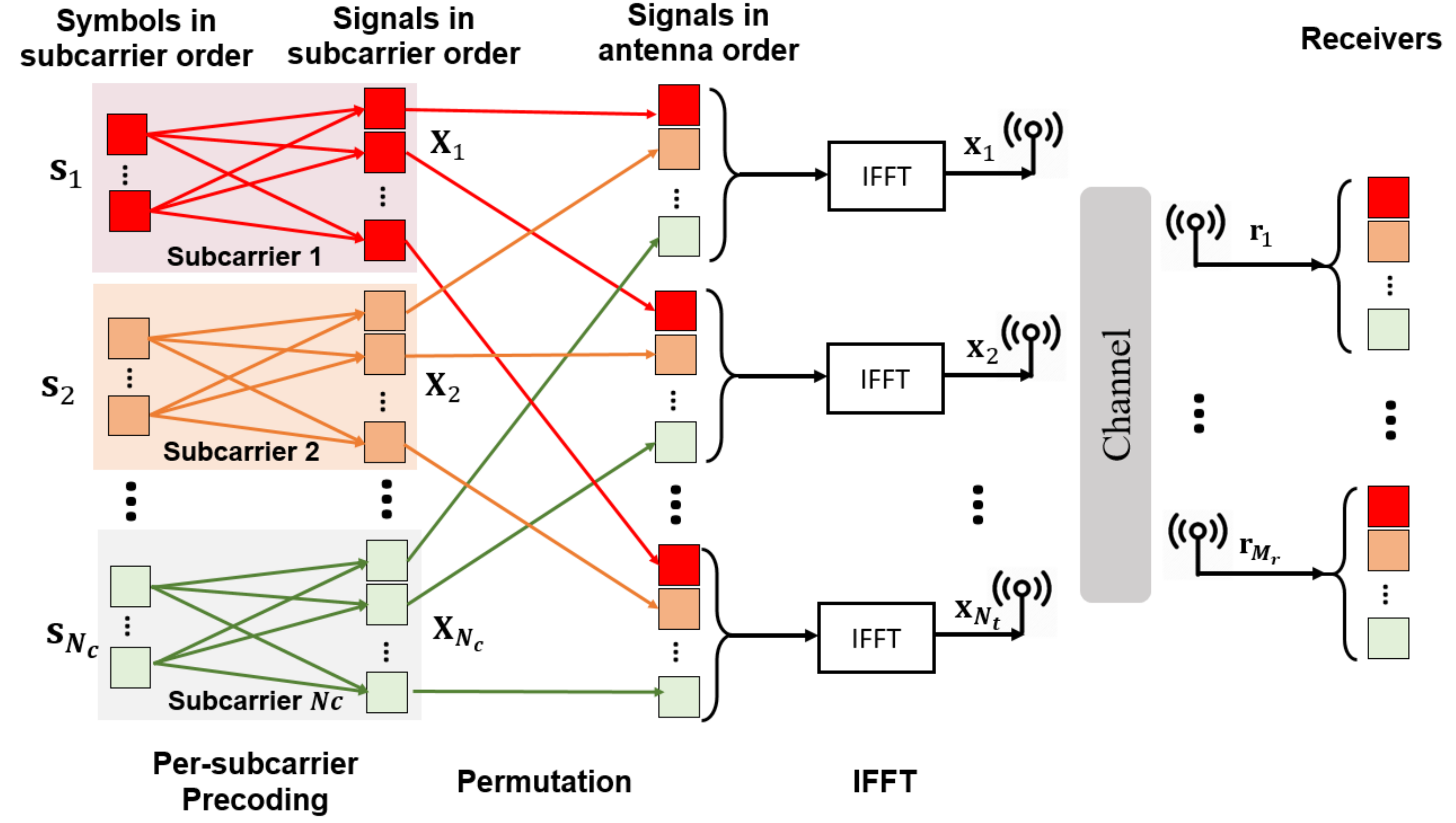}
\caption{Multicarrier frequency-space precoding formulation with permutation and IFFT.}
\label{fig_MU_precoding_form}
\end{figure}

As shown in Fig. \ref{fig_MU_precoding_form}, the frequency domain transmit symbols are permuted and $N_c$-point inverse fast Fourier transform (IFFT) operations are carried out for each antenna. As a result, the overall multicarrier frequency-space precoding can be represented as equation (\ref{eq_shpqx}) at the bottom of this page, where $\mathbf{s}$ represents the multicarrier information symbols, $\mathbf{H}_m$ the $M_r\times N_t$ channel matrix corresponding to the $m$th subcarrier, $\bar{\mathbf{P}}$ the permutation matrix, where $N_t\times N_c$ matrix $\mathbf{I}_{m,n}$ contains 0s expect for a 1 at element $(n,m)$ for $1\leq m \leq N_c$ and $1\leq n \leq N_t$, $\bar{\mathbf{Q}}$ the $N_x\times N_x$ diagonalization matrix comprised by FFT matrix $\mathbf{Q}$, where $N_x\triangleq N_tN_c$, and $\mathbf{x}$ the time-domain transmit signal. Note that the permutation matrix satisfies $\bar{\mathbf{P}}\bar{\mathbf{P}}^H=\mathbf{I}_{N_x}$ and $\bar{\mathbf{P}}^H$ can be considered as the reverse permutation matrix. Thus we have
\begin{equation}
\mathbf{s}={\mathbf{H}}\bar{\mathbf{P}}\bar{\mathbf{Q}}{\mathbf{x}}=\tilde{\mathbf{H}}\mathbf{x},
\label{eq_mui_remover}
\end{equation}
where $\tilde{\mathbf{H}}={\mathbf{H}}\bar{\mathbf{P}}\bar{\mathbf{Q}}$ represents the mapping operation of the time domain transmit signal to the information symbols of the overall massive MIMO system. 

\subsection{PAPR Basics}
In order to accommodate the large variations of the instantaneous transmit power at each antenna, the PA built for OFDM signals must have a wide linear region \cite{gomez2012link}. Since linear RF components are generally more expensive and less power efficient compared with their nonlinear counterparts, practical implementations of OFDM usually employ sophisticated PAPR reduction schemes. The power efficiency of PA is highly dependent on the PA architecture and input signal distribution, e.g. reference \cite{ochiai2012instantaneous} gives the power distribution of a single-carrier frequency-division multiple access (FDMA) signal. High input power back-off (IBO) is often required to keep the signal with high PAPR within the amplifier’s linear region and avoid in-band and out-of-band distortion \cite{yao2018dpd,vuk2018}. A highly linear (e.g. class A) PA with OFDM input and sufficient back-off to avoid out-of-band radiation could be even less than 5\%. Consider a complex baseband OFDM signal with $N_c$ subcarriers, the PAPR of signal ${\mathbf{x}}_m$ at the $m$th antenna is defined as the ratio between the maximum power and the average power of the entire OFDM signal \cite{ochiai2003performance}:
\begin{equation}
\text{PAPR}({\mathbf{x}}_m)=\frac{\norm{{\mathbf{x}}_m}_{\infty}^{2}}{P_{av}},
\label{eq3}
\end{equation}
where $P_{av}=E{|x(t)|^2}$ is defined as average transmitted power in statistical sense and $x(t)$ is the continuous transmitted signal. Note that the PAPR of complex baseband signal ${\mathbf{x}}_m$ satisfies $\text{PAPR}({\mathbf{x}}_m)\geq1$, and the equality is satisfied when all the symbols of $\mathbf{x}_m$ are constant as is the case with the PSK-type constellation \cite{ochiai2003performance} or constant envelope OFDM  \cite{mohammed2013per,zhang2016per,pan2014constant}. Since the distribution of PAPR would be more practical while the low-pass filters are applied in baseband signal processing unit of commercial wireless communication systems,  the true PAPR should be evaluated by oversampling \cite{ochiai2001distribution}. It is shown in \cite{tellambura2001computation} that the approximation error of PAPR is negligible when the oversampling factor is as large as 4. As a result, the oversampling operation is carried out in PAPR complementary cumulative distribution function (CCDF) comparisons in this paper to get better approximation of true PAPR of continuous and band-limited transmit signals.
\subsection{QCQP Problem Formulation With Perfect CSI}
The optimization problem of transmit power minimization subject to PAPR and MUI constraints is formulated as a QCQP problem in this subsection. We begin with the assumption of perfect CSI which will serve as a stepping stone towards the more realistic and robust PAPR-aware precoding scheme with imperfect CSI presented in the next section. The energy efficiency not only depends on the transmit power but also on the PAPR \cite{michailow2013low}. To derive an energy-efficient PAPR-aware precoding solution, we follow the design principles below:

\textbf{Principle 1}: The transmit power should be reduced as much as possible, but still provide the required signal-to-noise ratio (SNR) at the receiver. Lower transmit power reduces the power consumption of the BS and facilitates linear operation of the PA and hence lower out-of-band emissions.

\textbf{Principle 2}: Instead of minimizing the PAPR, set practical targets of PAPR for each PA since different PAPR ranges are allowed for different classes of PAs. This is also to avoid empty interior of the optimization problem.

\textbf{Principle 3}: Because of hardware impairments (e.g. amplifier nonlinearity, I/Q-imbalance, phase noise, quantization errors, etc), CSI uncertainties, and thermal noise \cite{bjornson2014massive}, a small amount of MUI will not significantly degrade the performance of massive MIMO-OFDM downlink precoding can be tolerated.

Based on these principles, our optimization strategy is to minimize the transmit power while keeping the PAPR of each antenna and MUI below the predetermined thresholds $\alpha_m(1\leq m\leq N_t)$ and $\delta_e$, respectively. Noticed that the symbol-wise (local) average power instead of statistically average power is applied in the PAPR representation, since the optimizations are carried out on symbol basis. Therefore, the problem can be formulated as:
\begin{subequations}
\begin{align}\label{p1_0}
(\textbf{P1}):\min_{\mathbf{x}} \;\;\;\;\;&\norm{\mathbf{x}}_2^2\\\label{p1_1}
\text{subject to} \;\;\;\;\;&\frac{\norm{\mathbf{F}_m{\mathbf{x}}}_{\infty}^{2}}{\norm{\mathbf{F}_m\mathbf{x}}_{2}^{2}}\leq \frac{\alpha_m}{N_c}, \qquad\forall m\in \{1,\cdots,N_t\}\\\label{p1_2}
&\norm{\mathbf{s}-\tilde{\mathbf{H}}\mathbf{x}}_{2}^{2}\leq\delta_e,
\end{align}
\end{subequations}
where $\mathbf{x}=[\mathbf{x}_1^H\;\cdots\;\mathbf{x}_{N_t}^H]^H$ and $\mathbf{x}_m=\mathbf{F}_m\mathbf{x}$.\footnote{$\mathbf{F}_m=[\underbrace{\mathbf{0}_{N_c},...,\mathbf{I}_{N_c},...,\mathbf{0}_{N_c}}_{N_t\;\text{submatrices}}]$, $\mathbf{I}_{N_c}$ is the $m$th submatrix of $\mathbf{F}_m$, the other submatrices of $\mathbf{F}_m$ are $\mathbf{0}_{N_c}$.} The MUI constraint of (\ref{p1_2}) is derived from (\ref{eq_mui_remover}) when a precoding error $\delta_e$ is allowed. The MIMO techniques with CSIT achieve multiplexing gain and hence DoFs. In addition, the deployments of massive MIMO systems which equipped BS with significantly larger number of antennas than users facilitate excessive DoFs when comparing with traditional MIMO systems. Particularly in the proposed optimization problem, the level of underdetermination of underdetermined matrix $\tilde{\mathbf{H}}$ is proportional to the number of BS antennas while assuming the number of user is known. The excessive DoFs enable us to select the transmit signals in  (\ref{p1_2}) from a larger solution space in the optimization \textbf{P1}. The proposal in  \cite{yao2018dpd} which exploits excessive DoFs to reduce the complexity of digital predistortion could explain how excessive DoFs are utilized in massive MIMO systems from a different perspective. Note that while the objective function for the power minimization in (\ref{p1_0}) and the MUI constraint in (\ref{p1_2}) are convex, the set of PAPR constraints in (\ref{p1_1}) are nonconvex. Although the constraints in (\ref{p1_1}) can be relaxed to convex form by neglecting the average power \cite{studer2013aware}\footnote{In \cite{studer2013aware}, constraint (\ref{p1_1}) is relaxed as $\norm{\mathbf{F}_m{\mathbf{x}}}_{\infty}^{2}\leq\beta_m,\forall m\in \{1,2,...,N_t\}$, and which is convex.}, this would result in a sub-optimal solution to the PAPR-aware precoding problem. As opposed to the sub-optimal convex relaxation in \cite{studer2013aware,bao2016efficient}, we convert (\ref{p1_1}) to quadratic form:
\begin{align}\label{eqn_papr5}\nonumber
\mathbf{x}^H\mathbf{F}_m^H\left(\mathbf{Y}_i-\frac{\alpha_m}{N_c}\mathbf{I}_{N_c}\right)\mathbf{F}_m\mathbf{x}\leq 0,&\forall i\in\{1,\cdots,N_c\},\\
 &\forall m\in\{1,\cdots,N_t\},
\end{align}
where $\mathbf{Y}_i\in\mathbb{R}^{N_c}$ is defined as 
\begin{equation}
\mathbf{Y}_i(i,j)=\left\{
                \begin{array}{ll}
                  1,j=i\\
                  0,j\neq i
                \end{array}
              \right..
\end{equation}
Equation (\ref{eqn_papr5}) can be written as
\begin{equation}
\mathbf{x}^H\mathbf{Q}_{im}\mathbf{x}\leq 0, \forall i,m\label{x_Q},
\end{equation}
by defining $\mathbf{Q}_{im}=\mathbf{F}_m^H\left(\mathbf{Y}_i-\frac{\alpha_m}{N_c}\mathbf{I}_{N_c}\right)\mathbf{F}_m$. Similarly, the constraints of (\ref{p1_2}) can be reformulated as
\begin{equation}
\begin{bmatrix}
    \mathbf{x}^H & t 
\end{bmatrix}
\begin{bmatrix}
    \tilde{\mathbf{H}}^H\tilde{\mathbf{H}}       & -\tilde{\mathbf{H}}^H\mathbf{s} \\
    -\mathbf{s}^H\tilde{\mathbf{H}}       & \mathbf{s}^H\mathbf{s} 
\end{bmatrix}
\begin{bmatrix}
   \mathbf{x}   \\
   t      
\end{bmatrix}\leq\delta_e,
\end{equation}
where $t=1$.

By defining $\tilde{\mathbf{x}}\triangleq\begin{bmatrix}
   \mathbf{x}   \\
   t      
\end{bmatrix}$ and ${\mathbf{G}}\triangleq\begin{bmatrix}
    \tilde{\mathbf{H}}^H\tilde{\mathbf{H}}       & -\tilde{\mathbf{H}}^H\mathbf{s} \\
    -\mathbf{s}^H\tilde{\mathbf{H}}       & \mathbf{s}^H\mathbf{s}     
\end{bmatrix}$, the constraints of (\ref{p1_2}) can be reduced to 
\begin{equation}
\tilde{\mathbf{x}}^H{\mathbf{G}}\tilde{\mathbf{x}}\leq\delta_e.\label{x_G}
\end{equation}
By combining (\ref{x_Q}) and (\ref{x_G}), the optimization problem \textbf{P1} becomes
\begin{subequations}
\begin{align}
(\textbf{P2}):\min_{\tilde{\mathbf{x}}} \;\;\;\;\;&\norm{\tilde{\mathbf{x}}}_2^2\\\label{p2_1}
\text{subject to} \;\;\;\;\;&\text{Tr}(\tilde{\mathbf{x}}^H\tilde{\mathbf{Q}}_{im}\tilde{\mathbf{x}})\leq 0,\forall i,m\\\label{p2_2}
&\text{Tr}(\tilde{\mathbf{x}}^H{\mathbf{G}}\tilde{\mathbf{x}})\leq\delta_e\\
&\tilde{x}_{N_x+1}=1,
\end{align}
\end{subequations}
where $\tilde{\mathbf{Q}}_{im}\triangleq
\left[
    \begin{array}{c;{2pt/2pt}c}
        \mathbf{Q}_{im} & \mathbf{0}_{N_x\times 1} \\ \hdashline[2pt/2pt]
        \mathbf{0}_{1\times N_x} & 0 
    \end{array}
\right]$ and $\tilde{x}_{N_x+1}=1$ indicates that the last entry of $\tilde{\mathbf{x}}$ is $1$. It is a nonconvex QCQP problem and can be relaxed to apply computationally efficient semidefinite programming (SDP) solutions. Note that the relationship between the objectives of \textbf{P1} and \textbf{P2} can be written as $\norm{\tilde{\mathbf{x}}}_2^2=\norm{\mathbf{x}}_2^2+1$ and the solutions of \textbf{P1} and \textbf{P2} as $\tilde{\mathbf{x}}_{opt}=\begin{bmatrix}
   \mathbf{x}_{opt}   \\
   1      
\end{bmatrix}$.

\section{Relaxed PAPR-Aware Massive MIMO Precoding using Semidefinite Programming}
In order to solve the previously derived nonconvex problems, one of the most common approaches is relaxing the nonconvex constraints to obtain a convex problem that approximates the original problem \cite{ma2004semidefinite}. This section shows that the derived QCQP for PAPR-aware massive MIMO precoding can be solved with the SDR method, before applying the randomization method for rank reduction.
\subsection{PAPR-Aware Massive MIMO Precoding Relaxation}
To derive the SDR of \textbf{P2}, a necessary step is to apply 
$\norm{\tilde{\mathbf{x}}}_2^2=\text{Tr}(\tilde{\mathbf{x}}\tilde{\mathbf{x}}^H)$, $\text{Tr}(\tilde{\mathbf{x}}^H\tilde{\mathbf{Q}}_{im}\tilde{\mathbf{x}})=\text{Tr}(\tilde{\mathbf{Q}}_{im}\tilde{\mathbf{x}}\tilde{\mathbf{x}}^H)$ and $\text{Tr}(\tilde{\mathbf{x}}^H{\mathbf{G}}\tilde{\mathbf{x}})=\text{Tr}({\mathbf{G}}\tilde{\mathbf{x}}\tilde{\mathbf{x}}^H)$. In particular, all the objective and constraints in \textbf{P2} are linear to the matrix $\bar{\mathbf{x}}\tilde{\mathbf{x}}^H$ after applying the SDR. As a result, we define a new variable as $\tilde{\mathbf{X}}\triangleq\tilde{\mathbf{x}}\tilde{\mathbf{x}}^H$. Therefore, $\tilde{\mathbf{X}}$ is a rank-1 symmetric positive semidefinite (PSD) matrix represented by $\text{rank}(\tilde{\mathbf{X}})=1$ and $\tilde{\mathbf{X}}\succeq 0$. The optimization problem \textbf{P2} is equivalent to 
\begin{subequations}
\begin{align}
(\textbf{P3}):\min_{\tilde{\mathbf{X}}} \;\;\;\;\;&\text{Tr}(\tilde{\mathbf{X}})\\\label{p3_1}
\text{subject to} \;\;\;\;\;&\text{Tr}(\tilde{\mathbf{Q}}_{im}\tilde{\mathbf{X}})\leq 0,\forall i,m\\\label{p3_2}
&\text{Tr}(\mathbf{G}\tilde{\mathbf{X}})\leq\delta_e\\\label{p3_3}
&\text{Tr}(\mathbf{O}_{x}\tilde{\mathbf{X}})=1\\\label{p3_4}
&\tilde{\mathbf{X}}\succeq 0\\\label{p3_5}
&\text{rank}(\tilde{\mathbf{X}})=1,
\end{align}
\end{subequations}
where $\mathbf{O}_x=\mathbf{o}_x\mathbf{o}_x^H$, $\mathbf{o}_x=[0,\cdots,0,1]^H\in\mathbb{R}^{(N_x+1)\times 1}$, and therefore $\text{Tr}(\mathbf{O}_{x}\tilde{\mathbf{X}})=\text{Tr}(\mathbf{o}_x^H\tilde{\mathbf{X}}\mathbf{o}_x)$. Notice that the objective function in \textbf{P3} minimizes the overall transmit power of the massive MIMO system and can be easily converted to a per-antenna power minimization problem when considering the fairness between transmit antennas \cite{yu2007transmitter}. 
However, the reformulation of \textbf{P3} is just as difficult to solve as problem \textbf{P2} since the rank constraint (\ref{p3_5}) is nonconvex although the objective function and all other constraints are convex in $\tilde{\mathbf{X}}$. By dropping the rank constraint we can obtain the following SDR version of \textbf{P2}:
\begin{align*}
(\textbf{P4}): \min_{\tilde{\mathbf{X}}} \;\;\;\;\;&\text{Tr}(\tilde{\mathbf{X}})\\
\text{subject to} \;\;\;\;\;&(\ref{p3_1})-(\ref{p3_4}).
\end{align*}
The SDR problem in \textbf{P4} is convex and thus does not suffer from local minima.  The optimization problem is relaxed and the optimal objective value of \textbf{P4} is always less or equal to the value of \textbf{P3},
\begin{equation}
\text{Tr}(\tilde{\mathbf{X}}_{opt})\leq\text{Tr}(\tilde{\mathbf{X}}_{opt}')\label{relaxopt},
\end{equation}
where $\tilde{\mathbf{X}}_{opt}$ is the optimal result of \textbf{P4} and $\tilde{\mathbf{X}}_{opt}'$ the optimal result of \textbf{P3}. The equality holds when there exists a rank 1 optimal solution $\tilde{\mathbf{X}}_{opt}=\tilde{\mathbf{x}}_{opt}\tilde{\mathbf{x}}_{opt}^H$. 
There is a corresponding convex Lagrange dual problem for any given problem, which yields a bound on the optimal value of the primal problem. The bound is tightest and the strong duality holds when the Slater condition is satisfied \cite{boyd2004convex}. The robust extensions of \textbf{P4}, which accounts for imperfect channel knowledge based on bounded-error and statistical models, are discussed in Section \ref{sec_robust_precoding}.
 
\subsection{Duality of the SDR}
\label{section3_b}
An SDR problem can be solved by employing the primal-dual path following algorithm, which has polynomial
complexity \cite{luo2010semidefinite}. Consider the Lagrange dual problem of \textbf{P2}, which can be written as
\begin{align}\nonumber
L&(\bm{\lambda},\nu_1,\nu_2,\mathbf{x})\\\nonumber
=&\tilde{\mathbf{x}}^H\tilde{\mathbf{x}}+\sum_{i=1}^{N_c}\sum_{m=1}^{N_t}\lambda_{im}\tilde{\mathbf{x}}^H\tilde{\mathbf{Q}}_{im}\tilde{\mathbf{x}}+\nu_1(\tilde{\mathbf{x}}^H\mathbf{G}\tilde{\mathbf{x}}-\delta_e)\\\nonumber
&-\nu_2(\tilde{\mathbf{x}}^H\mathbf{O}_{x}\tilde{\mathbf{x}}-1)\\\nonumber
=&\tilde{\mathbf{x}}^H\left(\mathbf{I}_{N_x+1}+\sum_{i=1}^{N_c}\sum_{m=1}^{N_t}\lambda_{im}\tilde{\mathbf{Q}}_{im}+\nu_1\mathbf{G}-\nu_2\mathbf{O}_{x}\right)\tilde{\mathbf{x}}\\
&+\nu_2-\nu_1\delta_e,
\end{align}
where $\bm{\lambda},\nu_1,\nu_2$ are non-negative Lagrangian multipliers. The Hessian matrix of the Lagrangian of the problem is derived as
\begin{equation}
\nabla_{\tilde{\mathbf{x}\mathbf{x}}^H}=\mathbf{I}_{N_x+1}+\sum_{i=1}^{N_c}\sum_{m=1}^{N_t}\lambda_{im}\tilde{\mathbf{Q}}_{im}+\nu_1\mathbf{G}-\nu_2\mathbf{O}_{x}.
\end{equation}
Therefore, both the dual problems of \textbf{P2} and \textbf{P4} are the same and can be written as
\begin{subequations}
\begin{align}
(\textbf{P5}):&\max_{\mathbf{V},\bm{\lambda},\nu_1,\nu_2} \;\;\;\;\;\nu_2-\nu_1\delta_e\\\label{p4_1}
\text{subject to} &\;\;\;\;\;\underbrace{\mathbf{I}_{N_x+1}+\sum_{i=1}^{N_c}\sum_{m=1}^{N_t}\lambda_{im}\tilde{\mathbf{Q}}_{im}+\nu_1\mathbf{G}-\nu_2\mathbf{O}_{x}}_{\triangleq\mathbf{V}}\succeq \mathbf{0}.
\end{align}
\end{subequations}
The Slater condition, which assumes the convex optimization problem has a nonempty interior, is usually supposed to be satisfied in semidefinite relaxation research problems in wireless communications \cite{ma2004semidefinite}. Specifically, in this work, we are looking for signals with minimum power in the interior set that satisfy both the precoding and practical PAPR constraints. As a result, we assume the condition of nonempty interior is satisfied and the strong duality holds for \textbf{P4}. As a result, we have
\begin{equation}
\text{Tr}(\tilde{\mathbf{X}}_{opt})=\nu_2^*-\nu_1^*\delta_e\label{strongdual},
\end{equation}
where $\tilde{\mathbf{X}}_{opt}$ is the optimal solution of \textbf{P4}, and $(\nu_1^*,\nu_2^*)$ is the optimal solution of \textbf{P5}. It is shown in (\ref{strongdual}) that given a larger allowable precoding error $\delta_e$, a lower optimal transmit power $\text{Tr}(\tilde{\mathbf{X}}_{opt})$ can be achieved. Intuitively, lower allowable precoding error means smaller feasible optimization solution region. Such a solution may increase the transmit power, but achieves less MUI at the receivers.
Upon defining $\bar{\mathbf{X}}\triangleq\begin{bmatrix}
    \Re\{\tilde{\mathbf{X}}\}       & -\Im\{\tilde{\mathbf{X}}\} \\
    -\Im\{\tilde{\mathbf{X}}\}       & \Re\{\tilde{\mathbf{X}}\}     
\end{bmatrix}$, $\bar{\mathbf{Q}}_{im}\triangleq\begin{bmatrix}
    \Re\{\tilde{\mathbf{Q}}_{im}\}       & -\Im\{\tilde{\mathbf{Q}}_{im}\} \\
    -\Im\{\tilde{\mathbf{Q}}_{im}\}       & \Re\{\tilde{\mathbf{Q}}_{im}\}     
\end{bmatrix}$, $\bar{\mathbf{G}}\triangleq\begin{bmatrix}
    \Re\{\mathbf{G}\}       & -\Im\{\mathbf{G}\} \\
    -\Im\{\mathbf{G}\}       & \Re\{\mathbf{G}\}     
\end{bmatrix}$, and $\bar{\mathbf{O}}_{x}\triangleq\begin{bmatrix}
    \mathbf{O}_{x}       & \mathbf{0} \\
    \mathbf{0}       & \mathbf{O}_{x}     
\end{bmatrix}$, the SDR problem \textbf{P4} can be transformed to a real-valued problem that can be addressed by the interior-point method. The details of the interior-point method can be found in \cite{ma2004semidefinite}.
\subsection{Approximation Error of the SDR}
In the previous analysis, we relaxed the PAPR-aware massive MIMO problem by dropping the nonconvex rank-1 constraint to formulate the optimization problem \textbf{P4}, and derive the optimal value $\tilde{\mathbf{X}}_{opt}$. The performance loss in terms of approximation error is evaluated in this subsection. If the rank of the optimal solution of \textbf{P4} is 1, there is no approximation error since we can further decompose it as $\tilde{\mathbf{X}}_{opt}=\tilde{\mathbf{x}}_{opt}\tilde{\mathbf{x}}_{opt}^H$, where $\tilde{\mathbf{x}}_{opt}$ is the optimal solution of \textbf{P2}. Without loss of generality, and assuming that $\text{rank}(\tilde{\mathbf{X}}_{opt})>1$, $\tilde{\mathbf{X}}_{opt}$ can be factorized using the Cholesky decomposition
\begin{equation}
\tilde{\mathbf{X}}_{opt}=\begin{bmatrix}
    \mathbf{U^H} \\
    \mathbf{u^H}    
\end{bmatrix}\begin{bmatrix}
    \mathbf{U}       & \mathbf{u}
\end{bmatrix}=
\begin{bmatrix}
    \mathbf{U^H}\mathbf{U}      & \mathbf{U^H}\mathbf{u} \\
    \mathbf{u^H}\mathbf{U}      &  \mathbf{u^H}\mathbf{u}     
\end{bmatrix}\label{eq_Uu},
\end{equation}
since being a Hermitian PSD matrix of rank $N_x$ or lower $\mathbf{U}\in\mathbb{C}^{N_x\times N_x}$, $\mathbf{u}\in\mathbb{C}^{N_x\times 1}$, and $\norm{\mathbf{u}}_2^2=1$. For now, suppose $\mathbf{x}_{opt}$ is the optimum of \textbf{P1}, we can derive that $\tilde{\mathbf{X}}_{opt}'$ is the optimum of problem \textbf{P3}, where
\begin{equation}
\tilde{\mathbf{X}}_{opt}'=
\left[
    \begin{array}{c;{2pt/2pt}c}
        \mathbf{x}_{opt}\mathbf{x}_{opt}^H & \mathbf{x}_{opt} \\ \hdashline[2pt/2pt]
        \mathbf{x}_{opt}^H & 1 
    \end{array}
\right].
\label{eq_xoptxopt}
\end{equation}
By comparing (\ref{eq_Uu}) and (\ref{eq_xoptxopt}), we have the following lemma:
\begin{lemma}\label{lemma_1}
Suppose $\Delta\mathbf{x}_u$ is the direct approximation error of $\mathbf{x}_{opt}$ ($\Delta\mathbf{x}_u=\mathbf{x}_{opt}-\mathbf{U^H}\mathbf{u}$). Then the upper bound of the approximation error satisfies $\norm{\Delta\mathbf{x}_u}_2^2\leq2\delta_e/|\lambda_{min}({\mathbf{H}}{\mathbf{H}}^H)|$, where $\lambda_{min}(\cdot)$ represents the minimum eigenvalue of a matrix.
\end{lemma}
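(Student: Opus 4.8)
The plan is to extract a concrete candidate vector from the Cholesky factors in (\ref{eq_Uu}), show that this candidate already satisfies the precoding-error constraint, and then convert the resulting residual into a bound on $\Delta\mathbf{x}_u$ through the smallest nonzero eigenvalue of the channel Gram matrix. Comparing (\ref{eq_Uu}) with (\ref{eq_xoptxopt}), the natural candidate is $\hat{\mathbf{x}}\triangleq\mathbf{U}^H\mathbf{u}$, so $\Delta\mathbf{x}_u=\mathbf{x}_{opt}-\hat{\mathbf{x}}$. Two structural facts drive the argument. First, $\mathbf{G}=\begin{bmatrix}\tilde{\mathbf{H}}&-\mathbf{s}\end{bmatrix}^H\begin{bmatrix}\tilde{\mathbf{H}}&-\mathbf{s}\end{bmatrix}\succeq\mathbf{0}$. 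Second, with $\hat{\tilde{\mathbf{x}}}\triangleq[\hat{\mathbf{x}}^H\;1]^H$ one has $\tilde{\mathbf{X}}_{opt}-\hat{\tilde{\mathbf{x}}}\hat{\tilde{\mathbf{x}}}^H=\mathbf{N}\succeq\mathbf{0}$, since its only nonzero block is $\mathbf{U}^H\mathbf{U}-\hat{\mathbf{x}}\hat{\mathbf{x}}^H=\mathbf{U}^H(\mathbf{I}-\mathbf{u}\mathbf{u}^H)\mathbf{U}$ and $\norm{\mathbf{u}}_2^2=1$. Combined with $\text{Tr}(\mathbf{G}\mathbf{N})\ge 0$, the SDR constraint (\ref{p3_2}) then forces $\norm{\tilde{\mathbf{H}}\hat{\mathbf{x}}-\mathbf{s}}_2^2=\hat{\tilde{\mathbf{x}}}^H\mathbf{G}\hat{\tilde{\mathbf{x}}}=\text{Tr}(\mathbf{G}\tilde{\mathbf{X}}_{opt})-\text{Tr}(\mathbf{G}\mathbf{N})\le\delta_e$; that is, the extracted signal already meets the MUI budget. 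Feasibility of $\mathbf{x}_{opt}$ in \textbf{P1} supplies the companion bound $\norm{\tilde{\mathbf{H}}\mathbf{x}_{opt}-\mathbf{s}}_2^2\le\delta_e$ from (\ref{p1_2}).

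Next I would write $\tilde{\mathbf{H}}\Delta\mathbf{x}_u=(\tilde{\mathbf{H}}\mathbf{x}_{opt}-\mathbf{s})-(\tilde{\mathbf{H}}\hat{\mathbf{x}}-\mathbf{s})$ and combine the two residual bounds to obtain $\norm{\tilde{\mathbf{H}}\Delta\mathbf{x}_u}_2^2\le 2\delta_e$; a blunt triangle inequality only yields $4\delta_e$, so the cross term must be controlled (e.g.\ by using $\mathbf{N}\succeq\mathbf{0}$ more tightly, or the sign of $\Re\{(\tilde{\mathbf{H}}\mathbf{x}_{opt}-\mathbf{s})^H(\tilde{\mathbf{H}}\hat{\mathbf{x}}-\mathbf{s})\}$), and this is the one place where the stated constant has to be argued with care. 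Then, since $\bar{\mathbf{P}}$ is a permutation ($\bar{\mathbf{P}}\bar{\mathbf{P}}^H=\mathbf{I}_{N_x}$) and $\bar{\mathbf{Q}}$ a block-diagonal unitary DFT ($\bar{\mathbf{Q}}\bar{\mathbf{Q}}^H=\mathbf{I}$), we get $\tilde{\mathbf{H}}\tilde{\mathbf{H}}^H={\mathbf{H}}\bar{\mathbf{P}}\bar{\mathbf{Q}}\bar{\mathbf{Q}}^H\bar{\mathbf{P}}^H{\mathbf{H}}^H={\mathbf{H}}{\mathbf{H}}^H$, hence $\lambda_{min}(\tilde{\mathbf{H}}\tilde{\mathbf{H}}^H)=\lambda_{min}({\mathbf{H}}{\mathbf{H}}^H)$. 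As $\tilde{\mathbf{H}}$ has full row rank, $\norm{\tilde{\mathbf{H}}\mathbf{z}}_2^2\ge\lambda_{min}({\mathbf{H}}{\mathbf{H}}^H)\norm{\mathbf{z}}_2^2$ for every $\mathbf{z}$ in the row space of $\tilde{\mathbf{H}}$; applying this with $\mathbf{z}=\Delta\mathbf{x}_u$ gives $\norm{\Delta\mathbf{x}_u}_2^2\le\norm{\tilde{\mathbf{H}}\Delta\mathbf{x}_u}_2^2/|\lambda_{min}({\mathbf{H}}{\mathbf{H}}^H)|\le 2\delta_e/|\lambda_{min}({\mathbf{H}}{\mathbf{H}}^H)|$, which is the claim.

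I expect the real obstacle to be this final inversion rather than the algebra: because $N_t\gg M_r$, $\tilde{\mathbf{H}}$ is wide and $\tilde{\mathbf{H}}^H\tilde{\mathbf{H}}$ is rank-deficient, so $\norm{\tilde{\mathbf{H}}\mathbf{z}}_2^2\ge\lambda_{min}(\tilde{\mathbf{H}}\tilde{\mathbf{H}}^H)\norm{\mathbf{z}}_2^2$ is valid only after projecting $\mathbf{z}$ onto the row space of $\tilde{\mathbf{H}}$ (on $\ker\tilde{\mathbf{H}}$ the left side vanishes). One must therefore argue that $\Delta\mathbf{x}_u$ carries no component in $\ker\tilde{\mathbf{H}}$ — plausible, since any such component wastes transmit power and is invisible to every receiver, so the power objective drives both $\mathbf{x}_{opt}$ and the extracted/randomized signal onto the row space — or, more modestly, read the lemma as bounding the MUI-relevant part of the approximation error, which is precisely the part that affects the received symbols. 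The remaining ingredients (the Gram factorization of $\mathbf{G}$, the PSD split of $\tilde{\mathbf{X}}_{opt}$, the unitary invariance giving $\tilde{\mathbf{H}}\tilde{\mathbf{H}}^H=\mathbf{H}\mathbf{H}^H$, and full row rank ensuring $\lambda_{min}({\mathbf{H}}{\mathbf{H}}^H)>0$) are routine.
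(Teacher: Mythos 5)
Your skeleton is essentially the paper's: both arguments start from the Cholesky split in (\ref{eq_Uu}), use feasibility of $\mathbf{x}_{opt}$ in \textbf{P1} together with the constraint $\text{Tr}(\mathbf{G}\tilde{\mathbf{X}}_{opt})\leq\delta_e$ from \textbf{P4}, and finish with $\tilde{\mathbf{H}}\tilde{\mathbf{H}}^H=\mathbf{H}\bar{\mathbf{P}}\bar{\mathbf{Q}}\bar{\mathbf{Q}}^H\bar{\mathbf{P}}^H\mathbf{H}^H=\mathbf{H}\mathbf{H}^H$ and $\lambda_{min}(\mathbf{H}\mathbf{H}^H)$. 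Your intermediate step is correct and is in fact a cleaner equivalent of the paper's: writing $\tilde{\mathbf{X}}_{opt}-\hat{\tilde{\mathbf{x}}}\hat{\tilde{\mathbf{x}}}^H=\mathbf{N}\succeq\mathbf{0}$ with block $\mathbf{U}^H(\mathbf{I}-\mathbf{u}\mathbf{u}^H)\mathbf{U}$ and using $\text{Tr}(\mathbf{G}\mathbf{N})\geq 0$ to conclude $\norm{\tilde{\mathbf{H}}\hat{\mathbf{x}}-\mathbf{s}}_2^2\leq\delta_e$ plays the role of the paper's identity $\text{Tr}(\mathbf{G}\tilde{\mathbf{X}}_{opt})=\norm{\tilde{\mathbf{H}}\mathbf{U}^H-\mathbf{s}\mathbf{u}^H}_2^2$.

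The genuine gap is the constant, and you flag it yourself without closing it. From $\norm{\tilde{\mathbf{H}}\mathbf{x}_{opt}-\mathbf{s}}_2^2\leq\delta_e$ and $\norm{\tilde{\mathbf{H}}\hat{\mathbf{x}}-\mathbf{s}}_2^2\leq\delta_e$, the triangle inequality only yields $\norm{\tilde{\mathbf{H}}\Delta\mathbf{x}_u}_2^2\leq 4\delta_e$, and nothing in your setup controls the sign of $\Re\{(\tilde{\mathbf{H}}\mathbf{x}_{opt}-\mathbf{s})^H(\tilde{\mathbf{H}}\hat{\mathbf{x}}-\mathbf{s})\}$ --- neither feasibility nor $\mathbf{N}\succeq\mathbf{0}$ gives information about that cross term --- so as written your argument proves the lemma with $4\delta_e$ in place of $2\delta_e$ and stops one step short of the stated bound. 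The paper reaches the factor $2$ by different bookkeeping: it never bounds the extracted residual separately, but expands $\text{Tr}(\mathbf{G}\tilde{\mathbf{X}}_{opt})=\norm{\tilde{\mathbf{H}}(\mathbf{U}^H-\mathbf{x}_{opt}\mathbf{u}^H)+\mathbf{g}\mathbf{u}^H}_2^2$ with $\mathbf{g}=\mathbf{s}-\tilde{\mathbf{H}}\mathbf{x}_{opt}$, lower-bounds this by $\norm{\tilde{\mathbf{H}}(\mathbf{U}^H-\mathbf{x}_{opt}\mathbf{u}^H)}_2^2-\norm{\mathbf{g}\mathbf{u}^H}_2^2\geq\lambda_{min}(\tilde{\mathbf{H}}\tilde{\mathbf{H}}^H)\norm{\Delta\mathbf{x}_u}_2^2-\delta_e$, and subtracts $\delta_e$ only once; note this reverse-triangle-type step is itself where the cross-term issue you isolate gets absorbed (it requires the cross term to be at least $-\norm{\mathbf{g}\mathbf{u}^H}_2^2$), so your caution identifies the genuinely delicate point of the whole proof. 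Your row-space caveat, on the other hand, is not a defect relative to the paper: the paper applies $\norm{\tilde{\mathbf{H}}\mathbf{z}}_2^2\geq\lambda_{min}(\tilde{\mathbf{H}}\tilde{\mathbf{H}}^H)\norm{\mathbf{z}}_2^2$ without the restriction of $\mathbf{z}$ to the row space of $\tilde{\mathbf{H}}$ that you correctly note is needed for a wide channel matrix, so your treatment of that ingredient is the more careful of the two.
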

\begin{proof}
See Appendix A.
\end{proof}

Remarks:
\begin{itemize}
\item The upper bound of the approximation error is determined by both the maximum allowed precoding error (a.k.a. MUI allowance) $\delta_e$ and the minimum eigenvalue of ${\mathbf{H}}{\mathbf{H}}^H$, which are independent ($\mathbf{H}$ is defined in (\ref{eq_shpqx}), the distribution of $\lambda_{min}({\mathbf{H}}{\mathbf{H}}^H)$ can be found in \textit{Proposition 3.6} of \cite{man2010probabilistic});
\item The approximation error of the SDR will be sufficiently low for a sufficiently low $\delta_e$, which is a preset value in the aforementioned optimization problems \textbf{P1}-\textbf{P4};
\item Due to the duality of the optimization, a lower optimal transmit power is achieved with higher $\delta_e$.
\end{itemize}
\subsection{Rank-1 Solution via Randomization}
We are able to use the SDR solution of \textbf{P4} to approximate the solution of the original problem \textbf{P1} directly with error $\Delta\mathbf{x}_u$. However, it is still necessary to find a more precise method by deriving the solution of \textbf{P1} through the solution of \textbf{P4}. This is especially true when the maximum allowed precoding error $\delta_e$ is large. Randomization is widely applied to extract an approximate QCQP solution from an SDR solution $\tilde{\mathbf{X}}_{opt}$\cite{ma2002quasi,ma2004semidefinite,goemans1995improved,sidiropoulos2006semidefinite,so2008unified}. The randomization generates a set of candidate vectors $\{\tilde{\mathbf{x}}_i\}$ using $\tilde{\mathbf{X}}_{opt}$ and chooses the best solution from these candidate vectors. The overall SDR and rank reduction algorithm based on randomization is described below.
\begin{algorithm}[H]
%  \doublespacing
 \caption{SDR and rank reduction algorithm via randomization\cite{so2008unified}}
  \begin{algorithmic}[1]
 \renewcommand{\algorithmicrequire}{\textbf{Input:}}
 \renewcommand{\algorithmicensure}{\textbf{Output:}} 
 \REQUIRE  $\tilde{\mathbf{Q}}_{im}$, $\mathbf{G}$, $\mathbf{O}_{x}$, $M$ (number of randomization iterations)
 \ENSURE  $\tilde{\mathbf{x}}_{opt}$
 \\ Solve the SDR problem (\textbf{P4}), and obtain its solution $\tilde{\mathbf{X}}_{opt}$;
 \\ Factorize $\tilde{\mathbf{X}}_{opt}=\mathbf{V}^H\mathbf{V}$;
 \FOR{$i=1,2,...,M$}
    \STATE{Generate $\bm{\xi}_i\sim\mathcal{N}(0,1)$};
	\STATE{Compute $\tilde{\mathbf{x}}_{i}=\mathbf{V}^H\bm{\xi}_i$};
  \ENDFOR
 \\Choose $k=\text{arg}\min_{i=1,...,M}(\norm{\tilde{\mathbf{x}}_{i}}_2^2)$;
 \RETURN $\tilde{\mathbf{x}}_{opt}=\tilde{\mathbf{x}}_{k}$.
 \end{algorithmic} \label{algo_random}
 \end{algorithm}
The mathematical proof and approximation quality of Algorithm \ref{algo_random} can be found in \cite{so2008unified}, which shows that the optimal rank-1 solution exists for a large enough $M$.
Note that the complexity of the randomization process is much smaller relative to the SDR approach \cite{so2008unified}, and hence its computational complexity can be ignored in the complexity comparison in Section \ref{SDR_sim}.
\section{Robust PAPR-Aware Precoding With Imperfect CSI}
\label{sec_robust_precoding}
Robust optimization is usually developed to address either the bounded or statistical CSI errors. We propose three approaches to achieve robustness under channel uncertainties, namely coarse robust precoding (bounded-error), fine robust precoding via the S-procedure (bounded-error), and fine robust precoding via the Bernstein-type inequality (statistical error), respectively. In this paper, the transmit signals are optimized to meet the target PAPRs $\alpha_1,\cdots,\alpha_{N_t}$  and maximum allowed precoding error $\delta_e$ for every possible CSI error, and so facilitate robust precoding. In particular, robust precoding can be achieved by operating under the bounded-error channel conditions if the CSI errors are bounded. Alternatively, robust precoding can also be ensured in the probabilistic sense if the channel uncertainties are modeled as statistical distributions. The channel matrix is represented as below to model the channel estimation inaccuracies in massive MIMO systems
\begin{equation}
\tilde{\mathbf{H}}=\hat{\mathbf{H}}+\Delta\tilde{\mathbf{H}},
\end{equation}
where $\hat{\mathbf{H}}$ captures the channel matrix which is measured at the BS from the uplink channel, and $\Delta\tilde{\mathbf{H}}$ the channel uncertainty. 
Note that from (\ref{eq_mui_remover}), we get
\begin{align}\nonumber
\textbf{E}\left(\Delta\tilde{\mathbf{H}}\Delta\tilde{\mathbf{H}}^H\right)&=\textbf{E}\left(\Delta{\mathbf{H}}\bar{\mathbf{P}}\bar{\mathbf{Q}}\bar{\mathbf{Q}}^H\bar{\mathbf{P}}^H\Delta{\mathbf{H}}^H\right)\\
&=\textbf{E}\left(\Delta{\mathbf{H}}\Delta{\mathbf{H}}^H\right),
\end{align}
where $\Delta{\mathbf{H}}=\text{diag}\{\Delta\mathbf{H}_1,\cdots,\Delta\mathbf{H}_{N_c}\}$. $\Delta\mathbf{H}_m$ indicates the $M_r\times N_t$ channel error matrix of the $m$th subcarrier. In this section, we consider both the deterministic and stochastic model of the channel uncertainty, and derive the respective robust precoding schemes. 
\begin{itemize}
\item{\textit{Deterministic model (bounded-error):} } Here we assume the error matrix $\Delta\tilde{\mathbf{H}}$ take values from the bounded set
\begin{equation}
\Upsilon_d:=\left\{\text{Tr}(\Delta\tilde{\mathbf{H}}\Delta\tilde{\mathbf{H}}^H)\leq\epsilon_h^2\right\}\label{upsilon_d},
\end{equation}
where $\epsilon_h>0$ denotes the radius of the feasible region $\Upsilon_d$, which is associated with the degree of uncertainty of equivalent channel measurement $\hat{\mathbf{H}}$.
\item{\textit{Stochastic model:}} Here we assume the channel uncertainty subject to zero-mean complex Gaussian distribution with variance vector $\mathbf{R}_{\epsilon}$, that is
\begin{equation}
\Upsilon_p:=\left\{\text{vec}(\Delta\tilde{\mathbf{H}})\sim \mathcal{CN}(\mathbf{0},\mathbf{R}_{\epsilon})\right\}\label{upsilon_p},
\end{equation}
where $\text{vec}(\cdot)$ represents the column-by-column matrix vectorization.
\end{itemize}
The shape of the feasible region depends on the second-order statistics of the channel uncertainty, and the specific channel estimation approach. Since the channel information is only involved in the constraint (\ref{p3_2}), we need to guarantee no violation of constraint (\ref{p3_2}) to ensure robust precoding. 
\subsection{Coarse Robust Precoding}
We first adopt the bounded-error model and develop the coarse robust precoding for the bounded-error channel uncertainty. Since the constraints need to be satisfied for all possible CSI errors, the upper bound on the lefthand side of (\ref{p3_2}) should be less than the maximum allowed precoding error $\delta_e$. As a result, by applying the Cauchy-Schwarz inequality, the constraint (\ref{p3_2}) under the circumstances of coarse robust precoding can be represented as
\begin{align}\nonumber
\norm{\mathbf{s}-\left(\hat{\mathbf{H}}+\Delta\tilde{\mathbf{H}}\right)\mathbf{x}}_2^{2}&\leq\norm{\mathbf{s}-\hat{\mathbf{H}}\mathbf{x}}_2^{2}+\norm{\Delta\tilde{\mathbf{H}}\mathbf{x}}_2^{2}\\
&\leq\text{Tr}(\hat{\mathbf{G}}\tilde{\mathbf{X}})+\epsilon_h^2\left(\text{Tr}(\tilde{\mathbf{X}})-1\right),
\end{align} 
where $\hat{\mathbf{G}}\triangleq\begin{bmatrix}
    \hat{\mathbf{H}}^H\hat{\mathbf{H}}       & -\hat{\mathbf{H}}^H\mathbf{s} \\
    -\mathbf{s}^H\hat{\mathbf{H}}       & \mathbf{s}^H\mathbf{s}     
\end{bmatrix}$. Therefore, the constraint in terms of coarse robust precoding becomes
\begin{equation}
\text{Tr}\left((\hat{\mathbf{G}}+\epsilon_h^2\mathbf{I}_{N_x+1})\tilde{\mathbf{X}}\right)\leq\delta_e+\epsilon_h^2,
\label{crp}
\end{equation}
which is a standard SDR constraint. Therefore, the optimization problem \textbf{P4} with coarse robust precoding can be reformulated by replacing (\ref{p3_2}) with (\ref{crp}):
\begin{align*}
(\textbf{P6}):\min_{\tilde{\mathbf{X}}} \;\;\;\;\;&\text{Tr}(\tilde{\mathbf{X}})\\
\text{subject to} \;\;\;\;\;&(\ref{p3_1}),(\ref{p3_3}),(\ref{p3_4}),(\ref{crp}).
\end{align*}
The problem \textbf{P6} provides a robust optimization with a loose upper bound of precoding error. The advantage of coarse robust precoding lies in the fact that the change between the constraints (\ref{p3_2}) and (\ref{crp}) is not significant.
\subsection{Fine Robust Precoding via S-Procedure}
The loose upper bound derived in (\ref{crp}) for coarse robust precoding degrades the performance of robust optimization because it changes the feasibility region. As a result, a fine robust precoding with a tighter upper bound is proposed here. The fine robust precoding is derived as
\begin{equation}
\norm{\mathbf{s}-\left(\hat{\mathbf{H}}+\Delta\tilde{\mathbf{H}}\right)\mathbf{x}}_2^{2}=\text{Tr}\left((\hat{\mathbf{H}}_s+\Delta \mathbf{H}_s)\mathbf{W}_x(\hat{\mathbf{H}}_s+\Delta \mathbf{H}_s)^H\right),
\label{con_s_procedure}
\end{equation}
where we define $\hat{\mathbf{H}}_s\triangleq\begin{bmatrix}
    \hat{\mathbf{H}}       & -\mathbf{s}    
\end{bmatrix}$, $\Delta \mathbf{H}_s=\begin{bmatrix}
    \Delta\tilde{\mathbf{H}}       & \mathbf{0}    
\end{bmatrix}$ and $\mathbf{W}_x=\tilde{\mathbf{X}}|_{t=1}$. It does not sacrifice the performance degradation of the optimization as does its coarse counterpart.
\begin{lemma}
Given the bounded channel error matrix in (\ref{upsilon_d}), the constraint in (\ref{con_s_procedure}) can be relaxed to
\begin{subequations}
\begin{align}\label{x-y}
&\delta_e-\lambda\epsilon_h^2-\text{Tr}\left((\tilde{\mathbf{X}}-\tilde{\mathbf{Z}})\hat{\mathbf{H}}_s^H\hat{\mathbf{H}}_s\right)\geq 0\\
&\tilde{\mathbf{X}}(-\tilde{\mathbf{X}}+\lambda\mathbf{I}_{N_x+1})^{-1} \tilde{\mathbf{X}}\succeq \tilde{\mathbf{Z}}\label{succ_y},
\end{align}
\end{subequations}
\label{lemma_deter}
\end{lemma}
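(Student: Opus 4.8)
The plan is to read the robust form of (\ref{con_s_procedure})---its right-hand side must stay $\le\delta_e$ for \emph{every} $\Delta\tilde{\mathbf{H}}$ in the Frobenius ball $\Upsilon_d$ of (\ref{upsilon_d})---and to turn this semi-infinite constraint into the finite conditions (\ref{x-y})--(\ref{succ_y}) by a lossless S-procedure followed by a Schur complement. First I would vectorise the uncertainty: set $\bm{\delta}\triangleq\text{vec}(\Delta\mathbf{H}_s)$ and, writing $\mathbf{W}_x=\tilde{\mathbf{X}}$ for brevity, use $\text{Tr}(\Delta\mathbf{H}_s\tilde{\mathbf{X}}\Delta\mathbf{H}_s^H)=\bm{\delta}^H(\tilde{\mathbf{X}}^{\top}\otimes\mathbf{I})\bm{\delta}$ and $\text{Tr}(\hat{\mathbf{H}}_s\tilde{\mathbf{X}}\Delta\mathbf{H}_s^H)=\bm{\delta}^H\text{vec}(\hat{\mathbf{H}}_s\tilde{\mathbf{X}})$ to expand the right-hand side of (\ref{con_s_procedure}) into a quadratic-plus-affine form $q(\bm{\delta})$ whose Hessian $\tilde{\mathbf{X}}^{\top}\otimes\mathbf{I}\succeq\mathbf{0}$ is positive semidefinite. (Carrying the full $\tilde{\mathbf{X}}$ rather than just its leading block---as the $\mathbf{0}$-padding in $\Delta\mathbf{H}_s$ would suggest---amounts to admitting an unstructured error of the same Frobenius norm, which is the quantity that appears in (\ref{succ_y}).) The robust constraint then reads: $\bm{\delta}^H\bm{\delta}\le\epsilon_h^2\Rightarrow q(\bm{\delta})\le\delta_e$.

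Next I would apply the S-procedure. Because there is a \emph{single} norm-ball constraint with nonempty interior ($\epsilon_h>0$), the implication is equivalent to the existence of a multiplier $\lambda\ge0$ with $\delta_e-q(\bm{\delta})-\lambda(\epsilon_h^2-\bm{\delta}^H\bm{\delta})\ge0$ for all $\bm{\delta}$; equivalently, the inner maximisation $\max_{\|\bm{\delta}\|\le\epsilon_h}q(\bm{\delta})$---a trust-region-type subproblem---has no Lagrangian duality gap. This S-procedure inequality is a $2\times2$-block LMI, and a Schur complement of its lower-right block $\lambda\mathbf{I}-\tilde{\mathbf{X}}^{\top}\otimes\mathbf{I}$---which must be $\succ\mathbf{0}$, forcing $\lambda>\lambda_{\max}(\tilde{\mathbf{X}})$---gives the scalar condition
\begin{equation*}
\delta_e-\lambda\epsilon_h^2-\text{Tr}\big(\tilde{\mathbf{X}}\hat{\mathbf{H}}_s^H\hat{\mathbf{H}}_s\big)-\text{vec}(\hat{\mathbf{H}}_s\tilde{\mathbf{X}})^H\big(\lambda\mathbf{I}-\tilde{\mathbf{X}}^{\top}\otimes\mathbf{I}\big)^{-1}\text{vec}(\hat{\mathbf{H}}_s\tilde{\mathbf{X}})\ge0 .
\end{equation*}
Using $(\mathbf{A}\otimes\mathbf{I})^{-1}=\mathbf{A}^{-1}\otimes\mathbf{I}$ and $\text{vec}(\mathbf{M})^H(\mathbf{C}\otimes\mathbf{I})\text{vec}(\mathbf{M})=\text{Tr}(\mathbf{M}^H\mathbf{M}\mathbf{C}^{\top})$, the last term ``de-vectorises'' to $\text{Tr}\big(\tilde{\mathbf{X}}\hat{\mathbf{H}}_s^H\hat{\mathbf{H}}_s\tilde{\mathbf{X}}(\lambda\mathbf{I}-\tilde{\mathbf{X}})^{-1}\big)$, which---since $\tilde{\mathbf{X}}$ commutes with $(\lambda\mathbf{I}-\tilde{\mathbf{X}})^{-1}$---can be re-grouped as $\text{Tr}\big(\tilde{\mathbf{X}}(-\tilde{\mathbf{X}}+\lambda\mathbf{I})^{-1}\tilde{\mathbf{X}}\cdot\hat{\mathbf{H}}_s^H\hat{\mathbf{H}}_s\big)$.

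The final step linearises this matrix-inverse term. I would introduce the auxiliary Hermitian variable $\tilde{\mathbf{Z}}$ as a surrogate for $\tilde{\mathbf{X}}(-\tilde{\mathbf{X}}+\lambda\mathbf{I})^{-1}\tilde{\mathbf{X}}$: by the Schur complement (legitimate since $\lambda\mathbf{I}-\tilde{\mathbf{X}}\succ\mathbf{0}$), the comparison of $\tilde{\mathbf{Z}}$ with this quantity is precisely the LMI (\ref{succ_y}). Since $\hat{\mathbf{H}}_s^H\hat{\mathbf{H}}_s\succeq\mathbf{0}$, the map $\mathbf{M}\mapsto\text{Tr}(\mathbf{M}\hat{\mathbf{H}}_s^H\hat{\mathbf{H}}_s)$ is monotone in the Loewner order, so after substituting $\tilde{\mathbf{Z}}$ for $\tilde{\mathbf{X}}(-\tilde{\mathbf{X}}+\lambda\mathbf{I})^{-1}\tilde{\mathbf{X}}$ the scalar condition collapses to (\ref{x-y}); choosing $\tilde{\mathbf{Z}}$ on the boundary of (\ref{succ_y}) shows that nothing is lost. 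Quantifying $\lambda$ and $\tilde{\mathbf{Z}}$ existentially then yields (\ref{x-y})--(\ref{succ_y}).

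\textbf{The main obstacle} I anticipate is bookkeeping rather than conceptual: keeping all conjugate-transposes straight through the vectorisation and Kronecker identities, and---more delicately---getting the \emph{direction} of every Loewner-order inequality right when the Schur complement is invoked and when the exact worst-case matrix is replaced by $\tilde{\mathbf{Z}}$, since the sign with which $\tilde{\mathbf{X}}(-\tilde{\mathbf{X}}+\lambda\mathbf{I})^{-1}\tilde{\mathbf{X}}$ enters (\ref{x-y}) is pinned down only by those manipulations. A minor further point is the boundary case $\lambda=\lambda_{\max}(\tilde{\mathbf{X}})$, where $\lambda\mathbf{I}-\tilde{\mathbf{X}}$ is singular and a generalised (range-constrained) Schur complement is needed.
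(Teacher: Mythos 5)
Your route is the same as the paper's Appendix B: vectorise the uncertainty, expand (\ref{con_s_procedure}) into a quadratic in the vectorised error, invoke the (lossless) S-procedure for the single norm ball (\ref{upsilon_d}), take a Schur complement of the resulting LMI, and de-vectorise the quadratic-over-inverse term to reach the exact robustified condition $\delta_e-\lambda\epsilon_h^2-\text{Tr}\bigl(\tilde{\mathbf{X}}\hat{\mathbf{H}}_s^H\hat{\mathbf{H}}_s\bigr)-\text{Tr}\bigl(\tilde{\mathbf{X}}(-\tilde{\mathbf{X}}+\lambda\mathbf{I}_{N_x+1})^{-1}\tilde{\mathbf{X}}\hat{\mathbf{H}}_s^H\hat{\mathbf{H}}_s\bigr)\geq0$. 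Up to that point your argument is sound and matches the paper (the Kronecker-convention difference $\tilde{\mathbf{X}}^{\top}\otimes\mathbf{I}$ versus $\mathbf{I}_{N_s}\otimes\mathbf{W}_x$ is immaterial, and you are in fact more careful than the paper about $\lambda>\lambda_{\max}(\tilde{\mathbf{X}})$ and about the $\mathbf{0}$-padding in $\Delta\mathbf{H}_s$).

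The gap is in your last step, and it is exactly the sign issue you flagged as the main obstacle. The quadratic term enters the exact condition with a minus sign, so replacing it by a surrogate $\tilde{\mathbf{Z}}$ obeying (\ref{succ_y}), i.e. $\tilde{\mathbf{Z}}\preceq\tilde{\mathbf{X}}(\lambda\mathbf{I}_{N_x+1}-\tilde{\mathbf{X}})^{-1}\tilde{\mathbf{X}}$, and using monotonicity of $\mathbf{M}\mapsto\text{Tr}(\mathbf{M}\hat{\mathbf{H}}_s^H\hat{\mathbf{H}}_s)$ produces $\delta_e-\lambda\epsilon_h^2-\text{Tr}\bigl((\tilde{\mathbf{X}}+\tilde{\mathbf{Z}})\hat{\mathbf{H}}_s^H\hat{\mathbf{H}}_s\bigr)\geq0$, in which $\tilde{\mathbf{Z}}$ carries the same sign as $\tilde{\mathbf{X}}$ rather than the difference $(\tilde{\mathbf{X}}-\tilde{\mathbf{Z}})$ appearing in (\ref{x-y}); to land on (\ref{x-y}) as stated one must effectively let $\tilde{\mathbf{Z}}$ denote the negative of the surrogate or reverse the Loewner inequality, and your sketch does not settle which. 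Moreover, the claim that choosing $\tilde{\mathbf{Z}}$ on the boundary of (\ref{succ_y}) shows ``nothing is lost'' asserts a tightness that does not hold for the lemma as written: eliminating $\tilde{\mathbf{Z}}$ from (\ref{x-y})--(\ref{succ_y}) gives $\delta_e-\lambda\epsilon_h^2-\text{Tr}\bigl(\tilde{\mathbf{X}}\hat{\mathbf{H}}_s^H\hat{\mathbf{H}}_s\bigr)+\text{Tr}\bigl(\tilde{\mathbf{X}}(\lambda\mathbf{I}_{N_x+1}-\tilde{\mathbf{X}})^{-1}\tilde{\mathbf{X}}\hat{\mathbf{H}}_s^H\hat{\mathbf{H}}_s\bigr)\geq0$, which is strictly weaker than the exact condition; the lemma and the paper's proof claim only this one-directional relaxation (``can be relaxed as''), not equivalence. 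Note finally that a robustness-preserving (conservative) surrogate would need the opposite ordering $\tilde{\mathbf{Z}}\succeq\tilde{\mathbf{X}}(\lambda\mathbf{I}_{N_x+1}-\tilde{\mathbf{X}})^{-1}\tilde{\mathbf{X}}$ paired with a $-\text{Tr}\bigl(\tilde{\mathbf{Z}}\hat{\mathbf{H}}_s^H\hat{\mathbf{H}}_s\bigr)$ term, so the direction of the inequality is not mere bookkeeping: it decides whether the resulting constraint restricts or enlarges the feasible set.
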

where $\lambda>0$.
\begin{proof}
See Appendix B.
\end{proof}
By applying Schur's complement \cite{horn2012matrix}, (\ref{succ_y}) can be further represented as
\begin{equation}
\begin{bmatrix}
    \tilde{\mathbf{X}}-\lambda\mathbf{I}_{N_x+1} &  \tilde{\mathbf{X}}    \\
    \tilde{\mathbf{X}}       &  -\tilde{\mathbf{Z}}
\end{bmatrix}
\succeq \mathbf{0}.\label{succ_xy}
\end{equation}
Therefore, the optimization problem P4 with fine robust precoding can be represented as
\begin{align*}
(\textbf{P7}):\min_{\tilde{\mathbf{X}},\mathbf{Y},\lambda>0} \;\;\;\;\;&\text{Tr}(\tilde{\mathbf{X}})\\
\text{subject to} \;\;\;\;\;&(\ref{p3_1}),(\ref{p3_3}),(\ref{p3_4}),(\ref{x-y}),(\ref{succ_xy}).
\end{align*}
\subsection{Fine Robust Precoding using the Bernstein-Type Inequality}
With the assumption of statistical channel uncertainty model, we propose a less conservative reformulation with tractable probabilistic constraint of fine robust precoding using the Bernstein-type inequality \cite{bechar2009bernstein}. The constraints of (\ref{sdphwh}) with the stochastic model can be expressed as
\begin{align}\label{bti}
\text{Pr}\Bigg(\text{Tr}\left(\hat{\mathbf{H}}_s\mathbf{W}_x\hat{\mathbf{H}}_s^H\right)+\Delta\mathbf{h}_s^H(\mathbf{I}_{N_s}\otimes\mathbf{W}_x)\Delta\mathbf{h}_s&+\\\nonumber
2\text{Re}\Big(\text{vec}(\hat{\mathbf{H}}_s\mathbf{W}_x)^H\Delta\mathbf{h}_s\Big)\leq\delta_e\Bigg)&\geq 1-\gamma.
\end{align}
\begin{lemma}\label{lemma_statistical}
The constraint (\ref{bti}) is equivalent to a set of constraints below
\begin{subequations}
\begin{align}\label{bti_c_1}
&\text{Tr}\Big({\mathbf{R}_{\epsilon}'}(\mathbf{I}_{N_s}\otimes\tilde{\mathbf{X}})\Big)-\sqrt{-2\theta}\lambda_1-\theta \lambda_2+\delta_e\\\nonumber
&\quad\quad\quad\quad\quad\quad\quad\quad\quad\quad\quad\quad\quad-\text{Tr}\left(\hat{\mathbf{H}}_s\tilde{\mathbf{X}}\hat{\mathbf{H}}_s^H\right)\geq 0\\\label{bti_c_2}
&\sqrt{\norm{{\mathbf{R}_{\epsilon}^{'1/2}}(\mathbf{I}_{N_s}\otimes\tilde{\mathbf{X}})\mathbf{R}_{\epsilon}^{'1/2}}_2^2+2\norm{{\mathbf{R}_{\epsilon}^{'1/2}}\text{vec}(\hat{\mathbf{H}}_s\mathbf{W}_x)}_2^2}\quad\leq \lambda_1\\\label{bti_c_3}
&\lambda_2\mathbf{I}_{N_x N_s}-{\mathbf{R}_{\epsilon}^{'1/2}}(\mathbf{I}_{N_s}\otimes\tilde{\mathbf{X}})\mathbf{R}_{\epsilon}^{'1/2}\succeq\mathbf{0}\\\label{bti_c_4}
&\lambda_2\geq 0.
\end{align}
\end{subequations}
\end{lemma}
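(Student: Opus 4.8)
The plan is to convert the outage constraint (\ref{bti}) into a deterministic, conic form by a two-stage argument: a whitening change of variables, followed by the Bernstein-type inequality for complex Gaussian quadratic forms \cite{bechar2009bernstein}, and then an epigraph reformulation of the resulting nonlinear terms using auxiliary scalars $\lambda_1,\lambda_2$ together with one linear matrix inequality. The target is precisely the system (\ref{bti_c_1})--(\ref{bti_c_4}).

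First I would whiten the channel-error vector. Writing $\mathbf{R}_\epsilon'=\mathbf{R}_\epsilon^{'1/2}\mathbf{R}_\epsilon^{'1/2}$ and substituting $\Delta\mathbf{h}_s=\mathbf{R}_\epsilon^{'1/2}\mathbf{e}$ with $\mathbf{e}\sim\mathcal{CN}(\mathbf{0},\mathbf{I}_{N_xN_s})$, the event inside (\ref{bti}) becomes $\mathbf{e}^H\mathbf{A}\mathbf{e}+2\Re\{\mathbf{r}^H\mathbf{e}\}\le \delta_e-\text{Tr}(\hat{\mathbf{H}}_s\mathbf{W}_x\hat{\mathbf{H}}_s^H)$, where $\mathbf{A}\triangleq\mathbf{R}_\epsilon^{'1/2}(\mathbf{I}_{N_s}\otimes\mathbf{W}_x)\mathbf{R}_\epsilon^{'1/2}$ is Hermitian positive semidefinite (since $\mathbf{W}_x\succeq\mathbf{0}$) and $\mathbf{r}\triangleq\mathbf{R}_\epsilon^{'1/2}\,\text{vec}(\hat{\mathbf{H}}_s\mathbf{W}_x)$. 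Using $\mathbf{W}_x=\tilde{\mathbf{X}}|_{t=1}$ and the cyclic property of the trace, $\text{Tr}(\mathbf{A})=\text{Tr}(\mathbf{R}_\epsilon'(\mathbf{I}_{N_s}\otimes\tilde{\mathbf{X}}))$ and $\text{Tr}(\hat{\mathbf{H}}_s\mathbf{W}_x\hat{\mathbf{H}}_s^H)=\text{Tr}(\hat{\mathbf{H}}_s\tilde{\mathbf{X}}\hat{\mathbf{H}}_s^H)$, so every term is affine in $\tilde{\mathbf{X}}$.

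Next I would invoke the Bernstein-type inequality: for $\mathbf{e}\sim\mathcal{CN}(\mathbf{0},\mathbf{I})$ and any rate $\eta>0$, the quantity $\mathbf{e}^H\mathbf{A}\mathbf{e}+2\Re\{\mathbf{r}^H\mathbf{e}\}$ stays below $\text{Tr}(\mathbf{A})+\sqrt{2\eta}\,\sqrt{\norm{\mathbf{A}}_F^2+2\norm{\mathbf{r}}_2^2}+\eta\,\lambda^{+}(\mathbf{A})$ with probability at least $1-e^{-\eta}$, where $\lambda^{+}(\mathbf{A})$ is the appropriate one-sided largest eigenvalue. Choosing $\eta=-\theta$ with $\theta=\ln\gamma<0$ forces the failure probability to $\gamma$, so a sufficient condition for (\ref{bti}) is that this upper bound plus $\text{Tr}(\hat{\mathbf{H}}_s\tilde{\mathbf{X}}\hat{\mathbf{H}}_s^H)$ does not exceed $\delta_e$. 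I then epigraph-ify: introduce $\lambda_1$ upper-bounding $\sqrt{\norm{\mathbf{A}}_F^2+2\norm{\mathbf{r}}_2^2}$ and $\lambda_2\ge 0$ upper-bounding $\lambda^{+}(\mathbf{A})$. The scalar inequality then becomes (\ref{bti_c_1}); the bound $\lambda_1\ge\sqrt{\norm{\mathbf{A}}_F^2+2\norm{\mathbf{r}}_2^2}$, after substituting $\mathbf{A}=\mathbf{R}_\epsilon^{'1/2}(\mathbf{I}_{N_s}\otimes\tilde{\mathbf{X}})\mathbf{R}_\epsilon^{'1/2}$ and $\mathbf{r}=\mathbf{R}_\epsilon^{'1/2}\text{vec}(\hat{\mathbf{H}}_s\mathbf{W}_x)$, is the second-order-cone constraint (\ref{bti_c_2}); and $\lambda_2\ge\lambda^{+}(\mathbf{A})$ together with $\lambda_2\ge 0$ is, by the variational characterization of the largest eigenvalue of the Hermitian matrix $\mathbf{A}$, exactly the LMI (\ref{bti_c_3}) plus (\ref{bti_c_4}). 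Since each auxiliary variable can be taken at its smallest feasible value, the four conditions are as tight as the Bernstein step allows, which is the sense of the stated equivalence.

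I expect the main obstacle to be bookkeeping rather than concept: pinning down the exact constants and, in particular, the correct one-sided eigenvalue term in the Bernstein-type bound (applied to $\mathbf{A}$ versus $-\mathbf{A}$, depending on which tail of the quadratic form must be controlled to certify the $\le\delta_e$ event), and carrying the real/complex representation and the Kronecker structure $\mathbf{I}_{N_s}\otimes\tilde{\mathbf{X}}$ consistently through the whitening so that the trace, Frobenius-norm and LMI terms land exactly in the form of (\ref{bti_c_1})--(\ref{bti_c_4}). The passage from the scalar eigenvalue bound to the LMI (\ref{bti_c_3}) and from the norm bound to the cone constraint (\ref{bti_c_2}) is then routine.
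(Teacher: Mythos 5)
Your proposal follows essentially the same route as the paper's Appendix C: whiten via $\Delta\mathbf{h}_s=\mathbf{R}_{\epsilon}^{'1/2}\mathbf{e}$, apply the Bernstein-type inequality of \cite{bechar2009bernstein} to the resulting complex Gaussian quadratic form, and epigraph the norm and eigenvalue terms with the slack variables $\lambda_1,\lambda_2$ to land on the linear, SOC and LMI constraints (\ref{bti_c_1})--(\ref{bti_c_4}); the only cosmetic difference is that the paper negates the quadratic form (taking $\mathbf{W}_{\epsilon}=-\mathbf{R}_{\epsilon}^{'1/2}(\mathbf{I}_{N_s}\otimes\mathbf{W}_x)\mathbf{R}_{\epsilon}^{'1/2}$) so as to invoke the lemma in its $\text{Pr}\{\cdot\geq 0\}$ form, whereas you control the upper tail of the positive-semidefinite matrix directly. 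Your closing remarks are well placed: the sign/one-sided-eigenvalue bookkeeping is precisely where the paper's $\theta$-convention is internally inconsistent, and your observation that the Bernstein step yields a safe (sufficient) convex restriction rather than a genuine equivalence is more accurate than the paper's ``equivalent''/``if and only if'' phrasing.
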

\begin{proof}
See Appendix C.
\end{proof}
The optimization problem with fine robust precoding in the statistical sense can then be represented as
\begin{align*}
(\textbf{P8}):\min_{\tilde{\mathbf{X}},\lambda_1,\lambda_2} \;\;\;\;\;&\text{Tr}(\tilde{\mathbf{X}})\\
\text{subject to} \;\;\;\;\;&(\ref{p3_1}),(\ref{p3_3}),(\ref{p3_4}),(\ref{bti_c_1})-(\ref{bti_c_4}).
\end{align*}
Note that (\ref{bti_c_1}), (\ref{bti_c_2}) and (\ref{bti_c_3}) are a linear constraint, a convex second-order cone (SOC) constraint, and a convex PSD constraint, respectively. This problem can be solved efficiently by using the interior-point method of \cite{boyd2004convex}.
\section{PAPR-Aware Precoding with Intercell Coordination}
\label{IC}
In this section, we incorporate the interference between the co-channel users into the proposed SDR framework and formulate three different transmission schemes for PAPR-aware precoding with intercell coordination: coherent transmission, fast cell selection and interference coordination. These schemes are illustrated in Fig. \ref{fig_intercoordination}. The single cell massive MIMO system that has been discussed in this paper so far effectively serves cell-center users. However, a practical limitation for enabling energy-efficient cellular network operation is the power consumption of the BS while serving cell-edge users. The users at the cell edges not only suffer from high path loss, but also from severe co-channel (intercell) interference when compared with the cell-center users \cite{sawahashi2010coordinated}\cite{irmer2011coordinated}. The PAPR-aware precoding with incercell coordination is based on the following observations:

\textbf{Observation 1}: PAPR-aware precoding is critical for serving cell-edge users. The BS allocates more power to improve SNR and satisfy the quality of service (QoS) of cell-edge users. PAPR reduction is especially necessary when the average power is high (high operating point), since expensive PAs are necessary to achieve high output power with linearity, and even then they may not be power efficient.

\textbf{Observation 2}: Downlink precoding is beneficial and feasible for intercell coordination when radio resource management is carried out in a centralized manner. Spatial DoFs available at the adjacent cells can be used to attenuate or mitigate the interferences (interference rejection) when the CSI of cell-edge users of adjacent cells are available at the centralized baseband processing unit (CBPU). Note that the interference is jointly suppressed across BSs by the CBPU rather than individually optimized at each BS.

The co-channel interference can be efficiently suppressed by the downlink beamforming optimization with available CSI at the BS. Centralized radio resource management allows for the optimization and scheduling to be done globally at the CBPU.  

\begin{figure}[h!]
\centering
\includegraphics[width=3.5in]{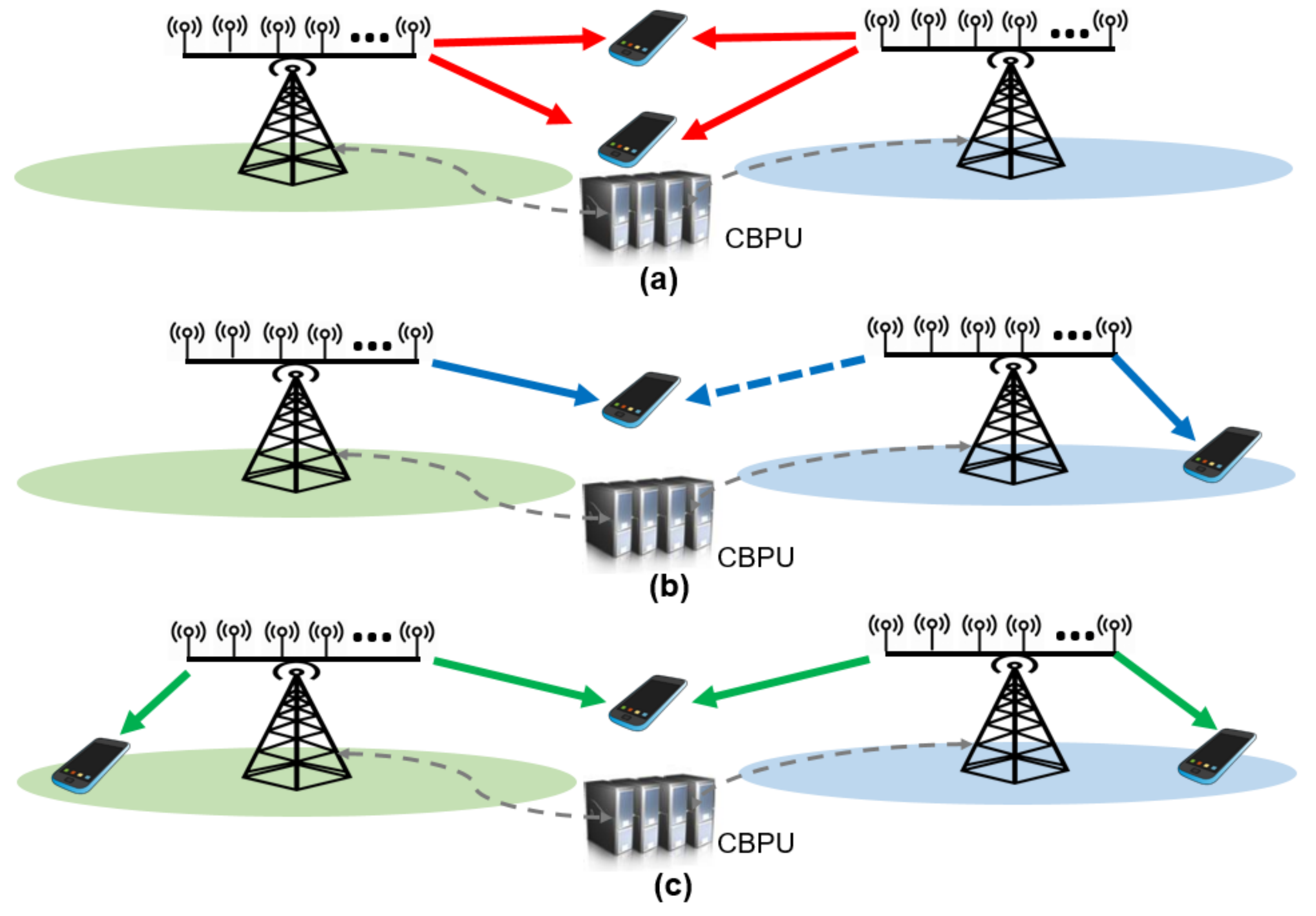}
\caption{Intercell coordination for three transmission schemes, (a) coherent transmission, (b) fast cell selection, and (c) interference coordination.}
\label{fig_intercoordination}
\end{figure}

\subsection{Coherent Massive MIMO Downlink Transmission} 
Coherent multipoint transmission is realized by a simultaneous transmission of signals to a user from multiple cooperating BSs and the contributions of the BSs are coherently combined at the user terminal. The advantages of coherent transmission are that the received SNR is increased and the spatial diversity enhanced. Spatial diversity that combines multiple independent fading paths reduces overall channel fading compared to a single link. The constraint (\ref{p3_2}) in \textbf{P4} under the circumstances of coherent transmission is then given as
\begin{align}
\norm{\mathbf{s}-\sum_{k=1}^{M_c}\tilde{\mathbf{H}}_k\mathbf{x}}_2^{2}\leq\delta_C,
\end{align} 
where $\delta_C$ denotes maximum precoding error allowance for coherent transmission and $\tilde{\mathbf{H}}_k$ the channel information between BS $k$ and all users. The equation can be reformulated as
\begin{align}
\text{Tr}(\mathbf{Z}_1\tilde{\mathbf{X}})\leq\delta_C,
\label{eq_ct}
\end{align}
where 
\begin{align}
\mathbf{Z}_1\triangleq\begin{bmatrix}
    (\sum_{k=1}^{M_c}\tilde{\mathbf{H}}_k)^H(\sum_{k=1}^{M_c}\tilde{\mathbf{H}}_k)       & -(\sum_{k=1}^{M_c}\tilde{\mathbf{H}}_k)^H\mathbf{s} \\
    -\mathbf{s}^H(\sum_{k=1}^{M_c}\tilde{\mathbf{H}}_k)       & \mathbf{s}^H\mathbf{s}     
\end{bmatrix}.
\end{align}
The total number of BSs involved in the interference coordination is $M_c$. The constraint (\ref{p3_2}) in \textbf{P4} is replaced by (\ref{eq_ct}), and the centralized radio resource management is applied to coordinate the optimization of a cluster of BSs.  
\subsection{Fast Massive MIMO Cell Selection}
The performances of cell-edge users are inevitably degraded by co-channel interference of adjacent cells because of the full frequency reuse. However, cell-edge users can be served dynamically by a selected BS through fast scheduling at the CBPU as shown in Fig. \ref{fig_intercoordination} (b). Therefore, one possible problem formulation is to minimize the interference level of cell-edge users by limiting the interference level from adjacent cells to some threshold. This fast cell selection technique is similar to nulling, where all the precoders are forced to have nulls placed toward other users. The constraint of a group of cell-edge users under the circumstances of fast cell selection can be written as
\begin{align}
\norm{\tilde{\mathbf{H}}_{e,m}\mathbf{x}}_2^{2}\leq\beta_s\lambda_{max}(\tilde{\mathbf{H}}_{e,m}^H\tilde{\mathbf{H}}_{e,m})\text{Tr}(\mathbf{x}\mathbf{x}^H),
\end{align} 
where $\tilde{\mathbf{H}}_{e,m}$ represents the channel between the $m$th BS and the group of cell-edge users. The expected total interference power received by the neighboring cell users can thus be limited to a fraction $\beta_s$ of the worst-case interference, where $\lambda_{max}(\cdot)$ represents the maximum eigenvalue of a matrix.
 Above constraint can be rewritten as
\begin{align}
\beta_s\lambda_{max}(\tilde{\mathbf{H}}_{e,m}^H\tilde{\mathbf{H}}_{e,m})\text{Tr}(\tilde{\mathbf{X}})-\text{Tr}(\mathbf{Z}_{e,m}\tilde{\mathbf{X}})\geq \beta_s\lambda_{max}(\tilde{\mathbf{H}}_{e,m}^H\tilde{\mathbf{H}}_{e,m}),
\label{eq_fast}
\end{align}
where 
\begin{align}
\mathbf{Z}_{e,m}\triangleq\begin{bmatrix}
    \tilde{\mathbf{H}}_{e,m}^H\tilde{\mathbf{H}}_{e,m}       & \mathbf{0} \\
    \mathbf{0}       & 0     
\end{bmatrix}.
\end{align}
Therefore, the constraint of (\ref{eq_fast}) is added to the optimization problem \textbf{P4} as an additional constraint in each BS.
\begin{algorithm}[H]
%  \doublespacing
 \caption{Intercell Coordination for Different Scenarios with CBPU}
  \begin{algorithmic}[1] 
  \STATE{\textbf{Given} the scenario of transmission scheme is determined by CBPU, and the index of selected cell if fast cell selection is applied};
  \IF{$<$coherent transmission$>$}
   \STATE {Solve $\min_{\tilde{\mathbf{X}}}\text{Tr}(\tilde{\mathbf{X}})\;\;\;\;
\text{subject to} \;\;\;\;(\ref{p3_1}),(\ref{p3_3})-(\ref{p3_4}),(\ref{eq_ct})$};
	\STATE{Derive $\tilde{\mathbf{x}}_{opt}^{CT}$ with Algorithm \ref{algo_random}};
	\STATE{$\tilde{\mathbf{x}}_{opt}\leftarrow\tilde{\mathbf{x}}_{opt}^{CT}$};
  \ELSIF {$<$fast cell selection$>$}
   \STATE{$j\leftarrow$index of selected cell};
 	\FOR{$i=1,\cdots,j-1,j+1,\cdots,M_c$}
   		\STATE {Solve $\min_{\tilde{\mathbf{X}}_i}\text{Tr}(\tilde{\mathbf{X}}_i)\;\;\;\;\text{subject to} \;\;\;\;(\ref{p3_1})-(\ref{p3_4}),(\ref{eq_fast})$};
		\STATE{Derive $\tilde{\mathbf{x}}_{opt,i}^{FS}$ with Algorithm \ref{algo_random}};
  \ENDFOR
  \STATE {Solve $\min_{\tilde{\mathbf{X}}_j}\text{Tr}(\tilde{\mathbf{X}}_j)\;\;\;\;\text{subject to} \;\;\;\;(\ref{p3_1})-(\ref{p3_4})$};
  \STATE{Derive $\tilde{\mathbf{x}}_{opt,j}^{FS}$ with Algorithm \ref{algo_random}};
  \STATE{$\tilde{\mathbf{x}}_{opt}\leftarrow[(\tilde{\mathbf{x}}_{opt,1}^{FS})^H,\cdots,(\tilde{\mathbf{x}}_{opt,M_c}^{FS})^H]^H$};
  \ELSIF {$<$interference coordination$>$}
   \STATE {Solve $\min_{\tilde{\mathbf{X}}}\text{Tr}(\tilde{\mathbf{X}})\;\;\;\;
\text{subject to} \;\;\;\;(\ref{p3_1}),(\ref{p3_3})-(\ref{p3_4}),(\ref{eq_ic3})$};
	\STATE{Derive $\tilde{\mathbf{x}}_{opt}^{IC}$ with Algorithm \ref{algo_random}};
	\STATE{$\tilde{\mathbf{x}}_{opt}\leftarrow\tilde{\mathbf{x}}_{opt}^{IC}$};
  \ENDIF
 \RETURN $\tilde{\mathbf{x}}_{opt}$.
 \end{algorithmic} \label{algo_intercell}
 \end{algorithm}
\subsection{Massive MIMO Interference Coordination}
In contrast to coherent transmission and fast cell selection, interference suppression based coordinated precoding is a more general case and refers to a coordinated selection of the transmit precoders in each cell. This technique aims at eliminating or reducing the effect of intercell interference. The constraint (\ref{p3_2}) in \textbf{P4} under the circumstances of interference coordination precoding is given as
\begin{align}
\norm{\mathbf{s}-\sum_{k=1}^{M_c}\tilde{\mathbf{H}}_k\mathbf{x}_k}_2^{2}&\leq\delta_I,
\end{align} 
where $\tilde{\mathbf{H}}_k$ captures the channel information between BS $k$ and all users, and  $\mathbf{x}_k$ the transmit signal of BS $k$. It can be further reformulated as
\begin{align}
\text{Tr}(\mathbf{Z}_3\tilde{\mathbf{X}})\leq\delta_I,
\label{eq_ic3}
\end{align}
where $\delta_I$ denotes the maximum precoding error allowance for interference coordination, and
\begin{align}
\mathbf{Z}_3\triangleq\begin{bmatrix}
    \tilde{\mathbf{H}}_I^H\tilde{\mathbf{H}}_I       & -\tilde{\mathbf{H}}_I^H\mathbf{s} \\
    -\mathbf{s}^H\tilde{\mathbf{H}}_I       & \mathbf{s}^H\mathbf{s}     
\end{bmatrix}.
\end{align}
$\tilde{\mathbf{H}}_I\triangleq\text{diag}\{\tilde{\mathbf{H}}_1,\cdots,\tilde{\mathbf{H}}_{M_c}\}$, $\mathbf{x}\triangleq[\mathbf{x}_1^H\;\cdots\;\mathbf{x}_{M_c}^H]^H$ and the relationship between $\mathbf{x}$ and $\tilde{\mathbf{X}}$ is the same as defined in Section II. The constraint (\ref{p3_2}) in \textbf{P4} is replaced by (\ref{eq_ic3}) to formulate the optimization for this type of multicell coordinated transmission as a single optimization problem.

The intercell coordination for different scenarios with CBPU are summarized in Algorithm \ref{algo_intercell}. The optimization of both coherent transmission and interference coordination are performed at the CBPU, the optimization of fast selection is performed at each BS. However, the selection of BSs (cell-edge user assignment) for fast selection is carried out by the CBPU.

\section{Numerical Results}
\label{SDR_sim}
This section presents numerical results based on Monte Carlo simulations of the proposed optimization techniques to validate our analysis and evaluate the proposed algorithms.

\begin{table}[htp]
\centering % no center environment

\begin{threeparttable}
\caption{Comparison of the Computational Complexities }
\label{tab_complexity}
\def\arraystretch{1.6}\tabcolsep=5pt 
    \begin{tabular}{|p{4.7cm}|c|}
    \hline
     Algorithms & Complexity \\\hline\hline
    Massive precoding without PAPR-awareness\tnote{1,3} & $\mathcal{O}(N_x^{0.5} log(1/\epsilon))$ \\\hline 
    PAPR-aware precoding\tnote{1} & $\mathcal{O}(N_x^{4.5} log(1/\epsilon))$ \\\hline
    Coarse \textit{robust} PAPR-aware precoding\tnote{2} & $\mathcal{O}(N_x^{4.5} log(1/\epsilon))$ \\\hline
    Deterministic fine \textit{robust} PAPR-aware precoding\tnote{2} & $\mathcal{O}(N_x^{4.5} log(1/\epsilon))$\\\hline
    Stochastic fine \textit{robust} PAPR-aware precoding\tnote{2} & $\mathcal{O}(N_x^{4.5}N_s^{0.5} log(1/\epsilon))$ \\\hline
    Coherent transmission with PAPR-aware precoding\tnote{1} & $\mathcal{O}(N_x^{4.5} log(1/\epsilon))$ \\\hline
    Fast cell selection with PAPR-aware precoding\tnote{1} & $\mathcal{O}(M_cN_x^{4.5} log(1/\epsilon))$ \\\hline
    Interference coordination with PAPR-aware precoding\tnote{1} & $\mathcal{O}(M_c^{0.5}N_x^{4.5} log(1/\epsilon))$ \\\hline
    \end{tabular}

\begin{tablenotes}\footnotesize
\item[1] Perfect CSI is assumed.
\item[2] Imperfect CSI is assumed.
\item[3] The baseline is formulated as $\min_{\tilde{\mathbf{X}}}\text{Tr}(\tilde{\mathbf{X}})\;\;\;\text{subject to} \;\;\;(\ref{p3_2})-(\ref{p3_4})$.
\end{tablenotes}

\end{threeparttable}

\end{table}

\subsection{Simulation Parameters}
The effects of PAPR-aware precoding are demonstrated by simulating single cell and multicell scenarios. The users are uniformly distributed, the BSs are equipped with up to 100 antennas and serves 10 single antenna users. The distance between BS antenna elements is one wavelength of the carrier. The case of uncoded 16-QAM is used in the simulation. We assume 128 subcarriers and model the multipath channel as a tapped delay line with 10 taps, each modeled as an independent Rayleigh fading channel.  All simulation results are averaged over 100 channel realizations and run with 2000 randomizations ($M=2000$ in Algorithm \ref{algo_random}) for rank reduction. 
\subsection{Complexity Analysis}

We compare the computational complexities of the proposed approaches in a variety of scenarios.
When the dual-scaling interior-point method is applied, the worst-case computational complexity is $\mathcal{O}(\max\{m, n\}^4n^{0.5} log(1/\epsilon))$ \cite{luo2010semidefinite}, where $m$ is the number of linear constraints, $n$ the order of the PSD constraint, and $\epsilon$ is the solution accuracy. Table \ref{tab_complexity} compares the computational complexities where $N_x\triangleq N_tN_c$, $N_s\triangleq M_rN_c$, and $M_c$ is the number of cooperating BSs. The first row of Table \ref{tab_complexity} is the baseline which represents the complexity of massive MIMO precoding without PAPR-awareness by dropping the constraint (\ref{p1_1}) in problem \textbf{P1}. The subsequent rows demonstrate the other proposed approaches, which have increased complexity comparing with the existing single-input single-output (SISO) systems \cite{ochiai2002performance,dinis2004class}. However, it should be noted that the complexity does not assume sparsity or any special structure in the data matrices $\tilde{\mathbf{Q}}_{im}$ and ${\mathbf{G}}$. Besides, unlike to the traditional PAPR reduction scheme, the scheme proposed in this work is specifically designed for the downlink. It utilizes the excessive DoFs in the base station of massive MIMO systems. With the promising applications of deep machine learning in 5G NR, it is expected that a significant amount of parallel computational resources such as graphics processing units (GPUs) will be equipped in 5G NR base stations and elsewhere (distributed processing units at the network edge and centralized processing units clusters in the ‘Cloud-RAN’ data center). Using the base station side cognition for deep learning is more advantageous than using the device side since there is more flexibility, better handling of large data flows and it can quickly update models, etc. The focus of this initial paper is to introduce the technique. The focus of future work will be to reduce the computational complexity by considering the matrix sparsity and better structure the algorithm for parallel processing.   
\subsection{Single-Cell Scenario}
\begin{figure}[h!]
\centering
\includegraphics[width=3.2in]{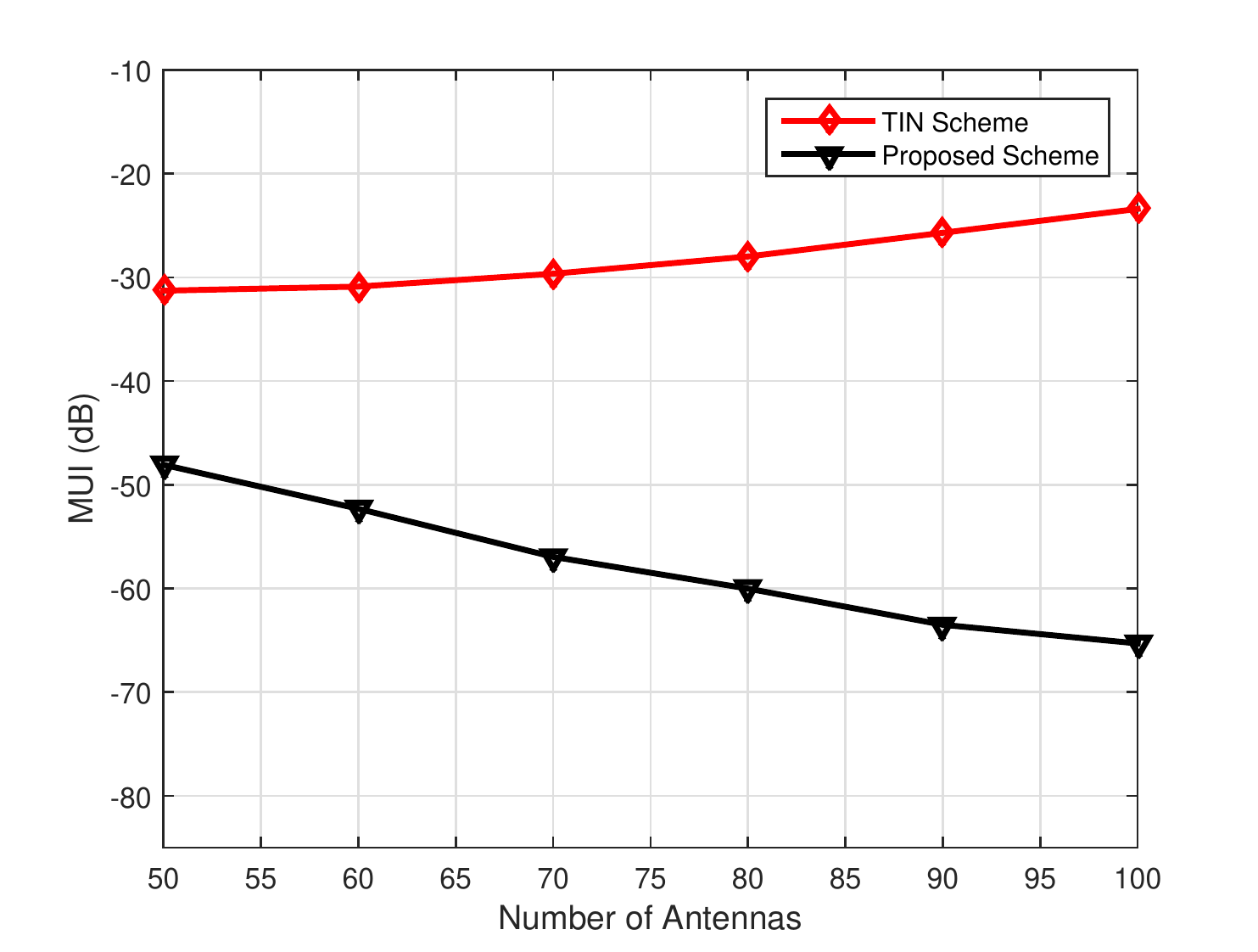}
\caption{Comparison of the MUI for various BS antenna configurations for TIN scheme and the proposed scheme.}
\label{fig_mui1}
\end{figure}

Fig. \ref{fig_mui1} plots the MUI\footnote{The MUI is defined to evaluate the multiuser interference at the receivers that
$\text{MUI}=\frac{\norm{\mathbf{s}-\tilde{\mathbf{H}}\mathbf{x}}_{2}^{2}}{\norm{\mathbf{s}}_{2}^{2}}$.} of the proposed scheme and treating interference as noise (TIN) scheme to evaluate the performance of interference cancellation. The predetermined transmit power is set to be 8 dBm and target PAPR reduction be 6 dB. Besides, the excessive DoFs are also evaluated in the simulation by comparing MUI of the proposed scheme and TIN with different number of BS antennas. The figure shows that the proposed scheme outperforms TIN scheme in terms of MUI and the MUI decreases with the number of antennas at BS since more DoFs are available. 

Fig. \ref{fig_power} compares the average instantaneous transmit power of our proposed method with the joint precoding, modulation and PAPR reduction (PMP) method of \cite{studer2013aware} and the per-antenna constant envelope precoding of \cite{mohammed2013per}. Since the transmit signals are optimized on a symbol-by-symbol basis, the instantaneous transmit power is used to evaluate the performance of the PAPR-aware precoding with different precoding errors. The lower bound of the transmit power without PAPR-awareness which was shown in row 1 of Table \ref{tab_complexity} is also included. As an indicator of MUI allowance, the normalized maximum allowed precoding error (NMAE) used in Fig. \ref{fig_power} is defined as $\alpha =\delta_e/\delta_M$, where $\delta_e$ is the maximum allowed precoding error applied in the optimization and $\delta_M=3.7\times10^{-4}$ the precoding error when the transmit power upper bound is -5 dBm. The simulation results of PAPR-aware precoding with PMP \cite{studer2013aware}\footnote{In \cite{studer2013aware}, the PAPR-aware precoding problem is formulated as $\min_{\tilde{\mathbf{x}}} {\norm{\mathbf{s}-\tilde{\mathbf{H}}\tilde{\mathbf{x}}}_2^2+\lambda\norm{\tilde{\mathbf{x}}}_{\infty}}$ where the signal peak ($\norm{\tilde{\mathbf{x}}}_{\infty}$) instead of PAPR ($N_c\norm{\tilde{\mathbf{x}}}_{\infty}^2 /\norm{\tilde{\mathbf{x}}}_{2}^2$) is applied to realize the convex relaxation, $\lambda\geq 0$ is a regularization parameter to achieve the tradeoff between the PAPR reduction and precoding error (MUI).} and per-antenna constant envelope method \cite{mohammed2013per}\footnote{In \cite{mohammed2013per}, the per-antenna constant envelope method is formulated as $\min_{\theta_1,\cdots,\theta_{N_t}}\sum_{m=1}^{M_r}|\frac{\sum_{n=1}^{N_t} h_{m,n}e^{j\theta_n}}{\sqrt{N_t}}-\sqrt{E_m}u_m|^2$, where $\theta_n$ denotes the phase of the transmit signal of the $n$th antenna, $u_m$ the information symbol of $m$th user, $E_m$ the information symbol energy of $m$th user. Since it is designed for the single-carrier waveform, the parallel OFDM symbols are serialized and per-carrier constant envelope precoding is performed to compare with the performance of the proposed approach.} are included for comparison with the proposed approach. The NMAE values for plotting the performance of the PMP method in Fig. \ref{fig_power} are obtained by altering parameter $\lambda$, whereas the MUI is minimized by constant envelope method to find the optimal phases $\theta_1,\cdots,\theta_{N_t}$ for each subcarrier.

Fig. \ref{fig_power} shows that the instantaneous transmit power of the proposed scheme for different PAPR reduction targets\footnote{The PAPR reduction target is achieved when the PAPR of 99\% of OFDM symbols is reduced with the amount of target.} decreases when the NMAE value is increased, as derived in Section \ref{section3_b}. The achievable lower bound of the transmit power is also plotted. The figure shows that the instantaneous transmit power stays constant for the PMP scheme of \cite{studer2013aware}. The instantaneous transmit power and NMAE which are derived from the optimal phases of the per-antenna constant envelope method are illustrated as a single point in Fig. \ref{fig_power}.

Fig. \ref{fig_sdr_samples} plots the absolute values of the time domain transmit signals for the first antenna. We observe that the transmit signal of the proposed method has a lower dynamic range. The signal generated by the PMP method has only an upper bound but the proposed SDR-based method has both an upper bound and a lower bound. The PMP method aims to reduce the peak power but the proposed SDR-based method aims to reduce the dynamic range of the transmit signals.

\begin{figure}[h!]
\centering
\includegraphics[width=3.2in]{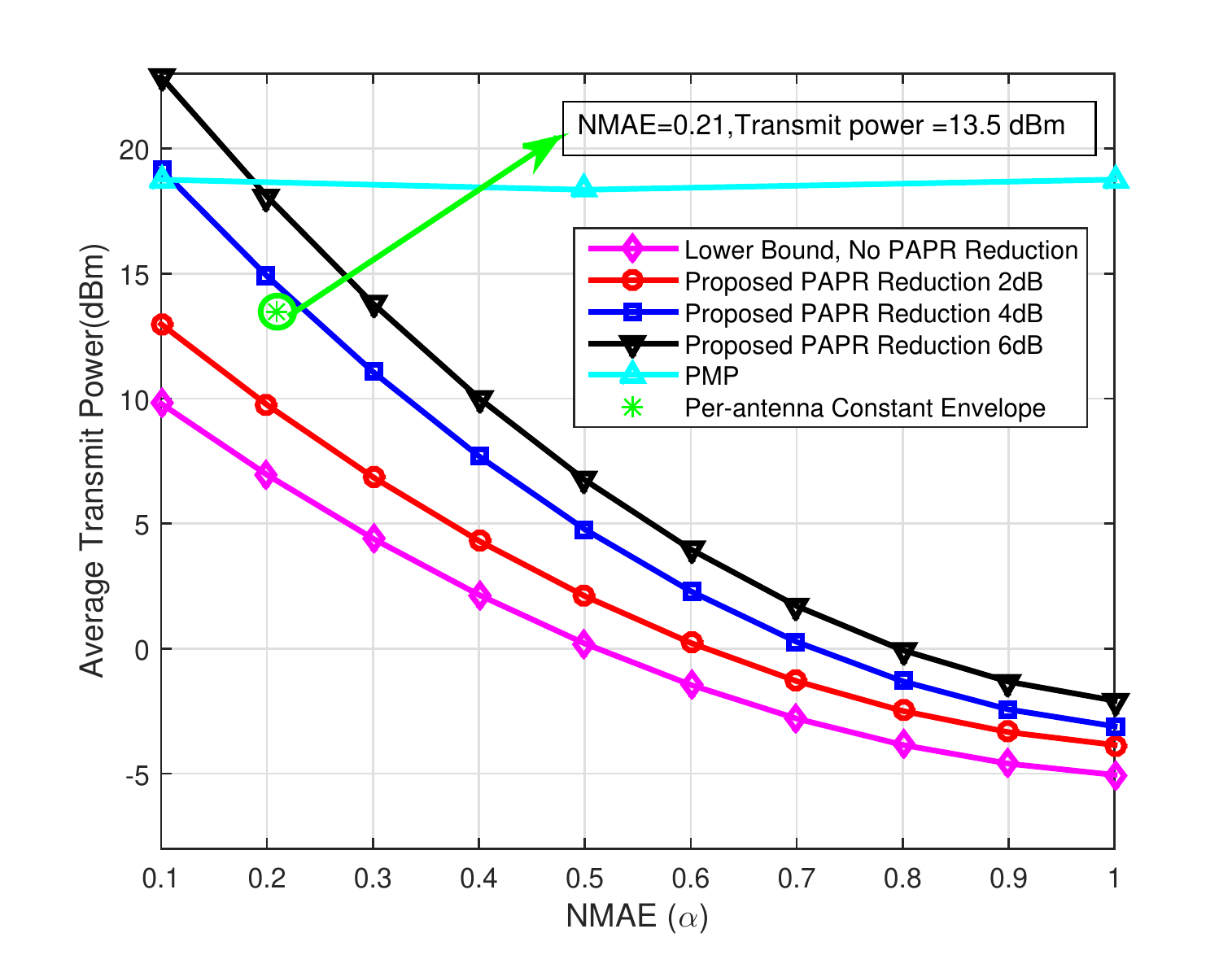}
\caption{Instantaneous transmit power versus NMAE for different PAPR reduction targets, PMP method and constant envelope method (the minimized MUI and associated transmit power of constant envelope method is shown as a single point).}
\label{fig_power}
\end{figure}

Fig. \ref{fig_papr} plots the PAPR CCDFs of the transmit signal for the following approaches: without PAPR-awareness, PMP, clipping and SDR-based with oversampling factor $L$ to be 4. The CCDF curves of all PAPR-aware precoded OFDM signals are shifted to the left with respect to the scheme that does not exploit the DoFs for PAPR reduction. By comparing our method's CCDF with the CCDF of the other approaches, our proposed approach achieves an additional PAPR reduction of approximately 2 dB.

\begin{figure}[h!]
\centering
\includegraphics[width=3.2in]{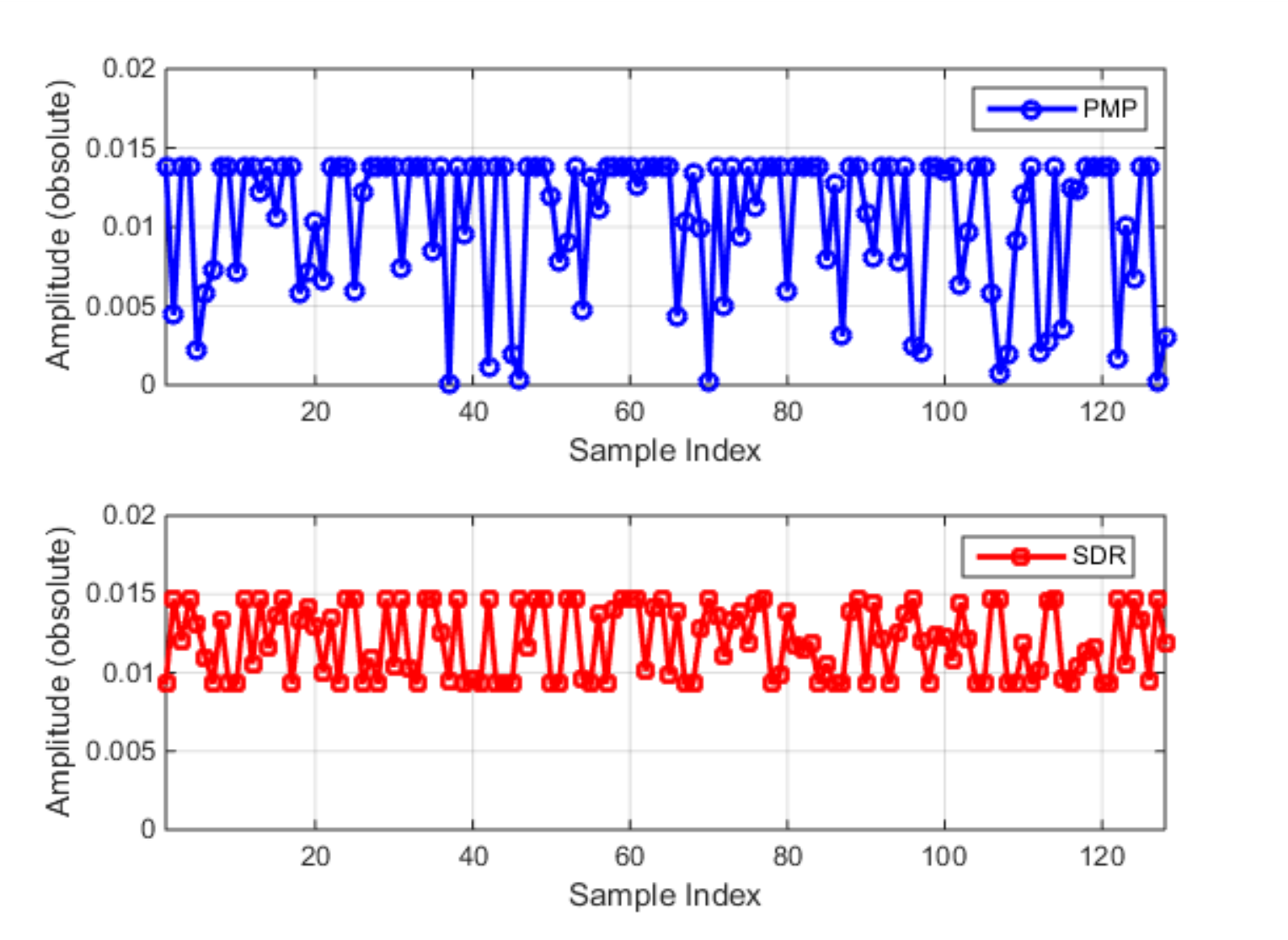}
\caption{The time-domain transmit signals at the first antenna for the PMP method (upper) and the proposed SDR method (lower) ($\delta_e=10^{-4}$).}
\label{fig_sdr_samples}
\end{figure}

\begin{figure}[h!]
\centering
\includegraphics[width=3.2in]{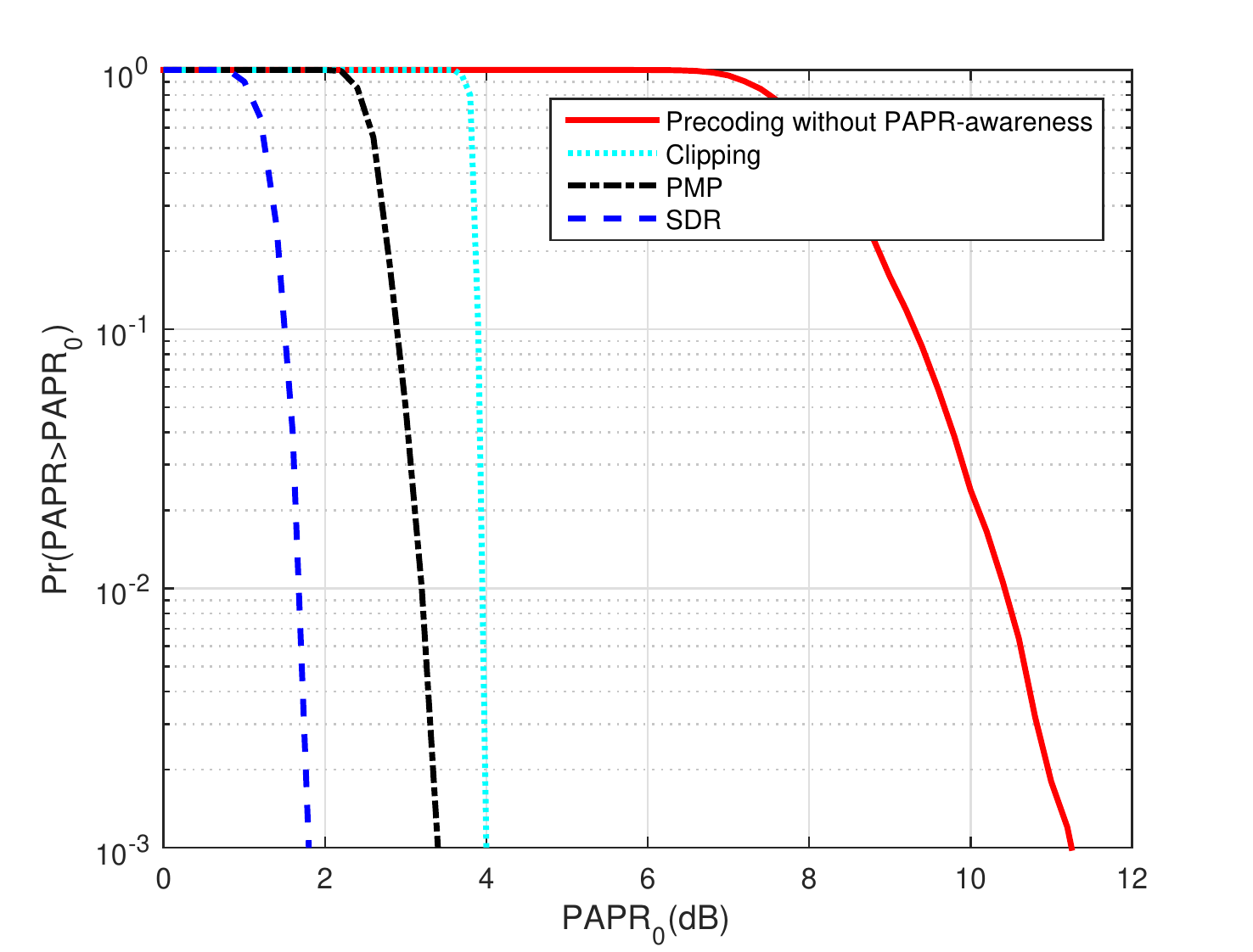}
\caption{Comparison of the PAPR CCDF curves for the first antenna: instantaneous transmit power minimization without PAPR reduction, clipping, PMP and the proposed SDR-based PAPR-aware precoding (oversampling factor $L$ = 4, $\delta_e=10^{-4}$).}
\label{fig_papr}
\end{figure}

The channel estimation error $\Delta\tilde{\mathbf{H}}=\tilde{\mathbf{H}}-\hat{\mathbf{H}}$ is modeled by generating $\Delta\tilde{\mathbf{H}}$ from a zero-mean Gaussian distribution with $\textbf{E}(\Delta\tilde{\mathbf{H}}\Delta\tilde{\mathbf{H}}^H)=\mathbf{R}_{\epsilon}=\sigma_\epsilon^2\mathbf{I}$, where we will use the same $\sigma_\epsilon^2=10^{-3}$ for all users and all subcarriers in the simulation. Moreover, to take different channels into account, the elements of the nominal channel $\hat{\mathbf{H}}$ are randomly generated according to zero-mean, unit-variance, i.i.d. Gaussian distributions. The philosophy of robust PAPR-aware precoding in this paper is to guarantee PAPR reduction and MUI level for any channel realization in the uncertainty region. In other words, we are interested in the behavior of a precoder with uncertainty of CSI.

Fig. \ref{fig_outage} plots the performance of the proposed fine robust precoding strategy with probabilistic uncertainty. 
In order to show the importance of taking the channel uncertainty into account for PAPR-aware precoding, we begin with the non-robust PAPR-aware precoder design when channel uncertainty exists, but the precoder design assumes perfect CSI at the transmitter. Fig. \ref{fig_outage} shows the percentage of constraint violations at different PAPR reduction targets. Note that a larger precoding error allowance $\delta_e$ is used here to achieve PAPR reduction as large as 12 dB. When the channel uncertainty is not considered, the probability of MUI target violation is high, especially with high PAPR reduction targets. However, the constraint violation probabilities are low for our proposed robust PAPR-aware precoding algorithm with different values for $\gamma$. As shown in Fig. \ref{fig_outage}, the constraint violation probability of the proposed robust precoding is approximately 10 times lower than the non-robust precoding when $\gamma=0.02$.

\begin{figure}[h!]
\centering
\includegraphics[width=3.2in]{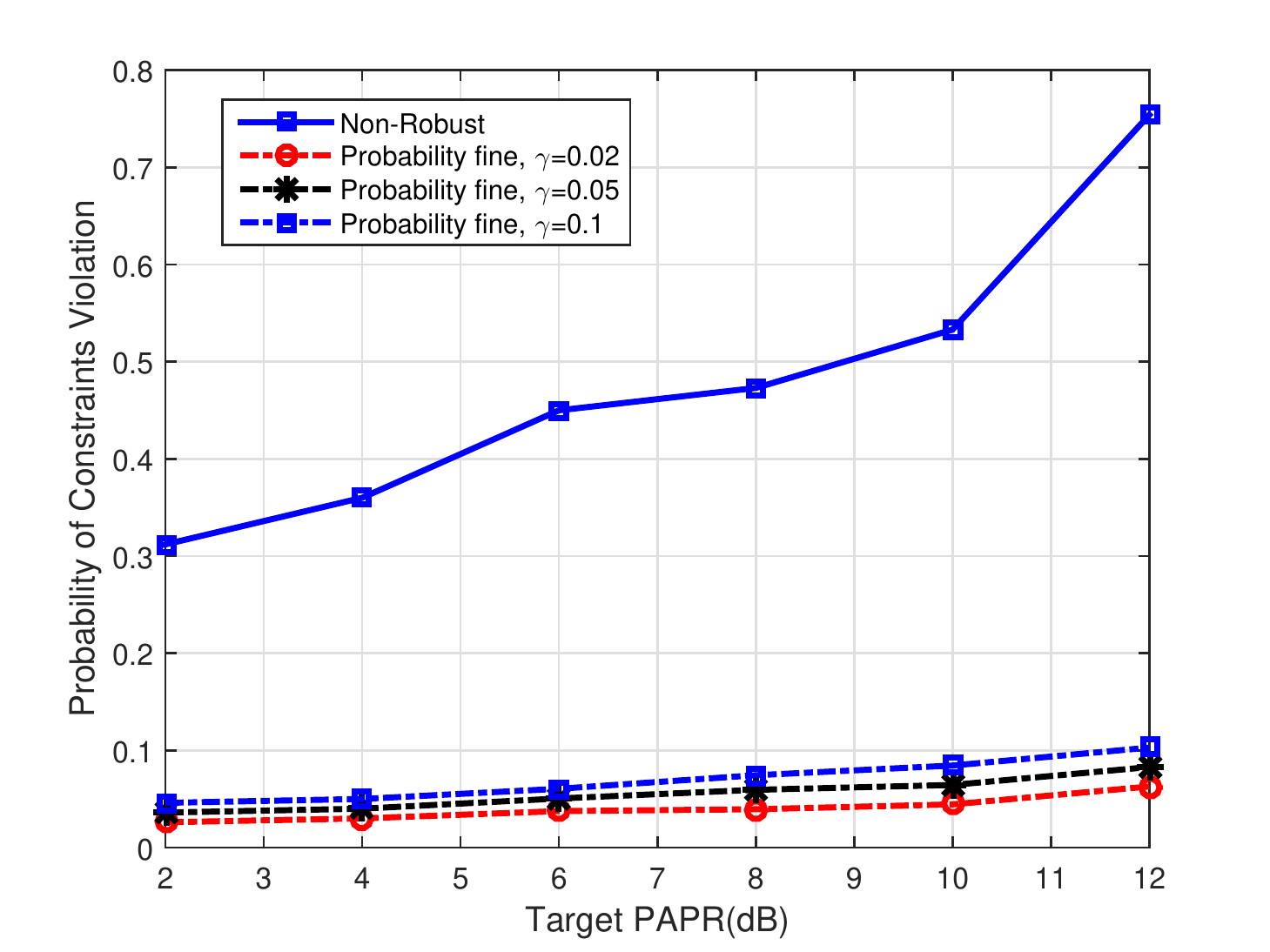}
\caption{Comparison of the percentage of constraints violations for the non-robust precoding \cite{studer2013aware} and fine robust precoding with probabilistic channel uncertainty design ($\delta_e=10^{-3}$).}
\label{fig_outage}
\end{figure} 

An alternative way to investigate the robustness capability is to look at the MUI at the receivers for various error bounds $\epsilon_h$. Fig. \ref{fig_sinr} plots the MUI changes with error bound $\epsilon_h$ for the PMP method, non-robust precoding based on SDR, coarse robust precoding and fine robust precoding with bounded uncertainty design. The results demonstrate that the proposed fine robust precoding using the S-procedure outperforms the coarse robust approach for PAPR reduction. As shown in Fig. \ref{fig_sinr}, the proposed fine robust precoding with bounded uncertainty outperforms the coarse robust precoding, the non-robust precoding based on SDR and the PMP method by 9.2 dB, 15.4 dB, 32.8 dB, respectively when the PAPR reduction is 6 dB. It outperforms the other methods by 7.3 dB, 12.9 dB and 31 dB when the PAPR reduction is 8 dB.

\begin{figure}[h!]
\centering
\includegraphics[width=3.2in]{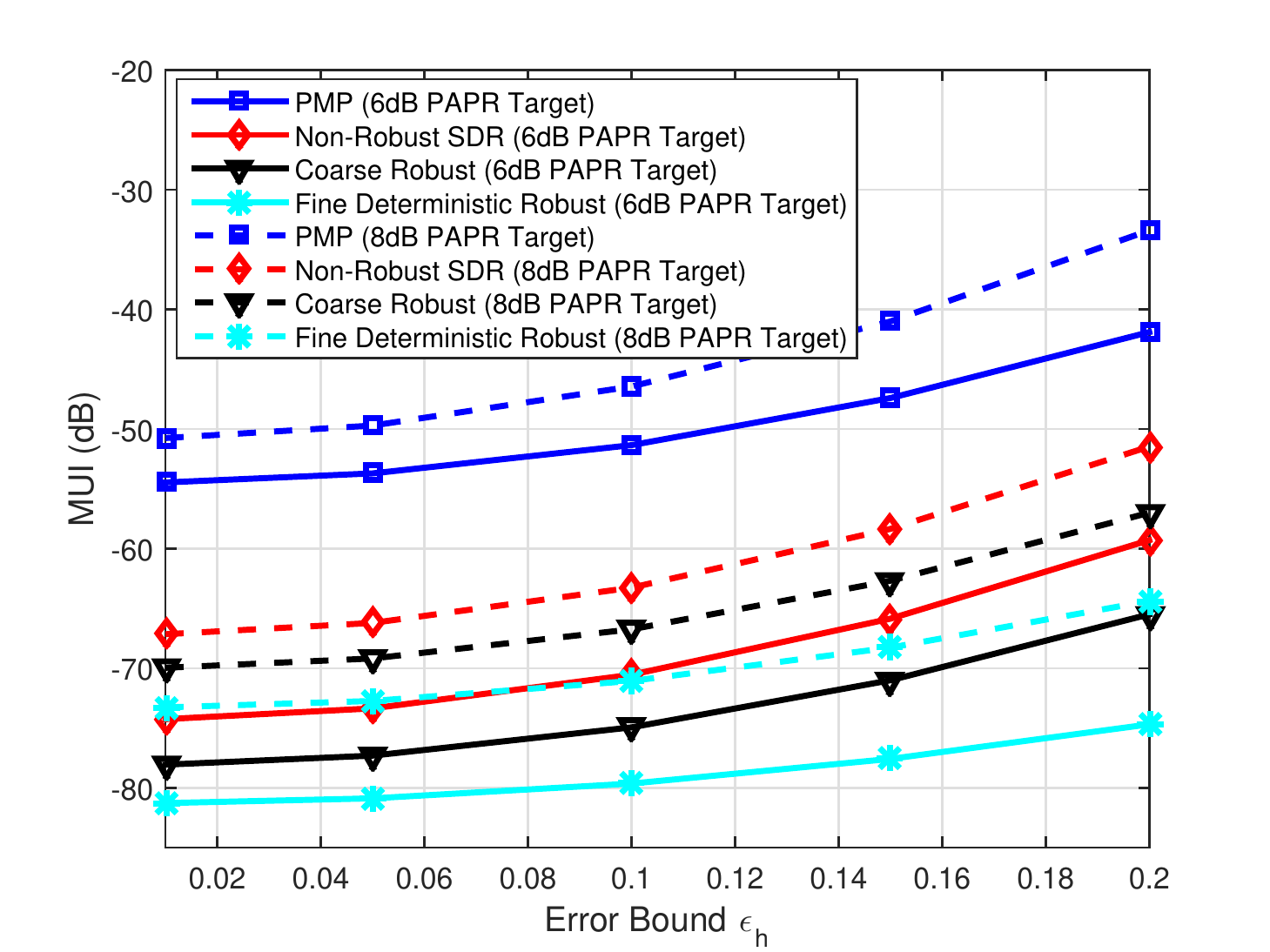}
\caption{Comparison of the MUI for various error bounds $\epsilon_h$ for the PMP method, non-robust precoding, coarse robust precoding and fine robust precoding with bounded channel uncertainty design for different PAPR reduction targets ($\delta_e=10^{-4}$).}
\label{fig_sinr}
\end{figure}

\subsection{Multi-Cell Scenario}
The symbol error rate (SER) performance of the proposed PAPR-aware intercell coordination scheme here is evaluated for a multicell network with 2 cells. Each BS is equipped with 100 antennas. The aim of this numerical analysis is to quantify the benefit of coordinating resource allocation, including scheduling, precoding, and nulling, across multiple cells.

Fig. \ref{fig_ser} plots the worst-case SER versus SNR without considering channel uncertainty (non-robust) for the coherent transmission, fast cell selection and interference coordination precoding scenarios. The PAPR reduction at the BS antennas of both cells are 6 dB. The results show that the coherent transmission outperforms fast cell selection and interference coordination. However, it sacrifices performance in terms of effective transmission rate since both cells coherently transmit the same information symbols. Fast cell selection outperforms the interference coordination in terms of SER. The fine robust precoding with S-procedure results for three different scenarios of intercell coordination are also plotted and show consistently better performance than their non-robust counterparts.
\begin{figure}[h!]
\centering
\includegraphics[width=3.2in]{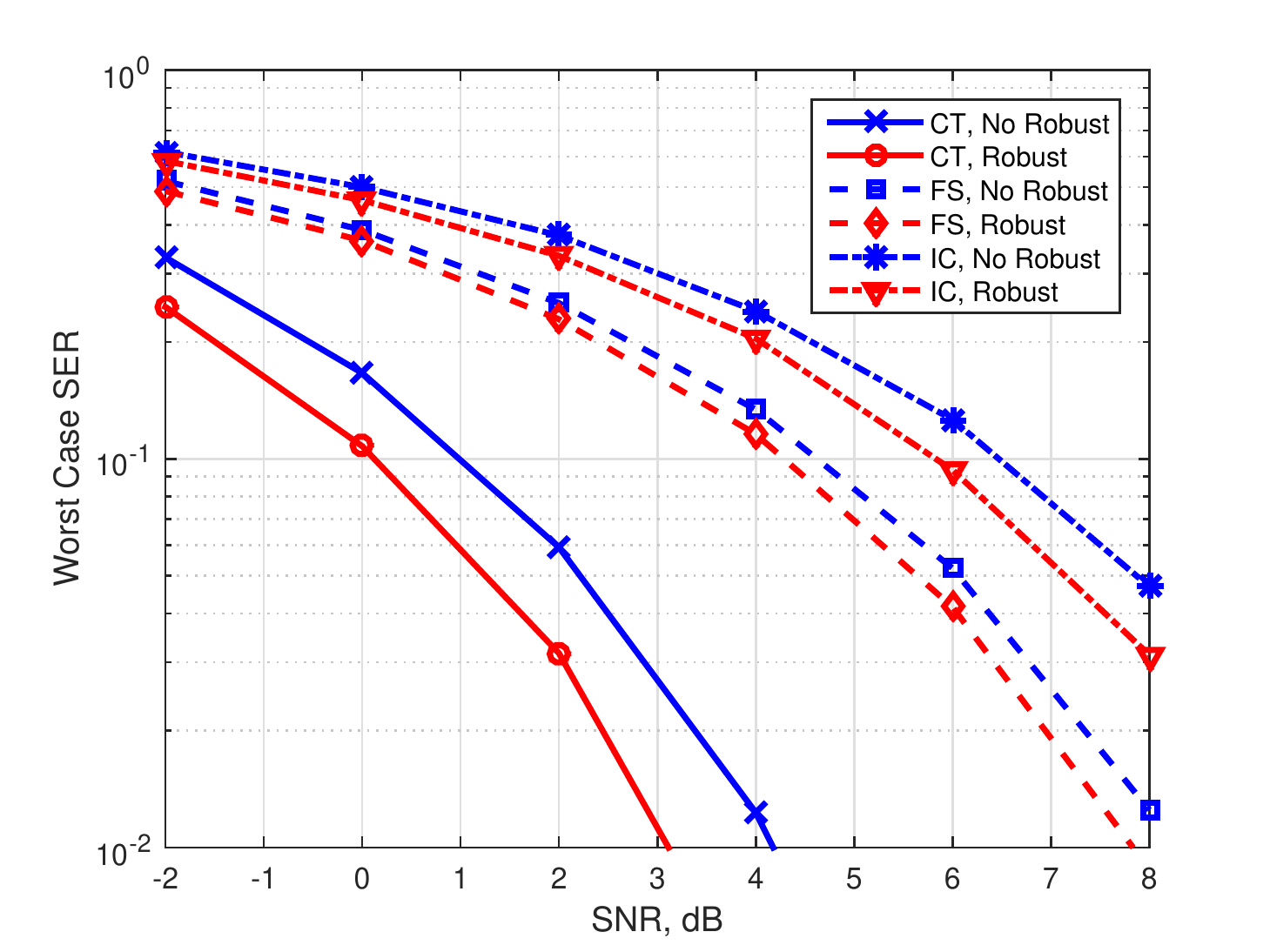}
\caption{Worst-case SER for different scenarios of intercell coordination (16-QAM modulation, $\delta_e=10^{-4}$, PAPR reduction target 6 dB, $\beta_s=0.1$, CT: coherent transmission (20 cell-edge users between the two BSs), FS: fast selection (8 cell-center users randomly located within each cell and 4 cell-edge users between the two BSs), IC: interference coordination (8 cell-center users randomly located within each cell and 4 cell-edge users between the two BSs)).}
\label{fig_ser}
\end{figure}

\section{Conclusion}
This paper has developed efficient SDR-based approaches for PAPR-aware massive MIMO-OFDM systems and has analyzed it in different scenarios of robust precoding and intercell coordination. We have formulated the optimal downlink precoding problem as minimizing the instantaneous transmit power subject to PAPR and MUI constraints. When the number of BS antennas is much larger than the number of users in a massive MIMO system, the proposed SDR-based method exploits the spatial DoFs to yield a per-antenna PAPR-constrained OFDM signal. The randomization based method has been applied for rank reduction of the SDR solution. We have also formulated robust versions to account for channel uncertainty and leverage the SDR method under various CSI uncertainties. Finally, we have developed a PAPR-aware precoding solution for intercell coordination. Using numerical analysis, we demonstrate how our approaches reduce the PAPR of BS antennas for single cell and multiple cells serving cell-center and cell-edge users. Numerical results have been conducted to illustrate the performance and flexibility of the proposed framework.

\section*{Appendix A\\ Proof of Lemma \ref{lemma_1}}
\begin{proof}
Assuming that $\mathbf{s}-\tilde{\mathbf{H}}\mathbf{x}_{opt}=\mathbf{g}$, thus we have $\norm{\mathbf{g}}_2^2\leq\delta_e$ and $\text{Tr}(\mathbf{G}\tilde{\mathbf{X}}_{opt})\leq\delta_e$ since $\mathbf{x}_{opt}$ and $\tilde{\mathbf{X}}_{opt}$ are the optimal values of \textbf{P1} and \textbf{P4} respectively. Note that $\norm{\mathbf{u}}_2^2=1$ and the matrix 
$\mathbf{G}=\begin{bmatrix}
    \tilde{\mathbf{H}}^H       \\
    -\mathbf{s}^H           
\end{bmatrix}\begin{bmatrix}
    \tilde{\mathbf{H}}       & -\mathbf{s}    
\end{bmatrix}$.
The constraint (\ref{p3_2}) with optimal value $\tilde{\mathbf{X}}_{opt}$ can be represented as
\begin{align*}
\text{Tr}(\mathbf{G}\tilde{\mathbf{X}}_{opt})&=\text{Tr}\left(\begin{bmatrix}
    \tilde{\mathbf{H}}^H       \\
    -\mathbf{s}^H           
\end{bmatrix}\begin{bmatrix}
    \tilde{\mathbf{H}}       & -\mathbf{s}    
\end{bmatrix}\begin{bmatrix}
    \mathbf{U^H} \\
    \mathbf{u^H}    
\end{bmatrix}\begin{bmatrix}
    \mathbf{U}       & \mathbf{u}
\end{bmatrix}\right)\\
&=\text{Tr}\left(\begin{bmatrix}
    \tilde{\mathbf{H}}       & -\mathbf{s}    
\end{bmatrix}\begin{bmatrix}
    \mathbf{U^H} \\
    \mathbf{u^H}    
\end{bmatrix}\begin{bmatrix}
    \mathbf{U}       & \mathbf{u}
\end{bmatrix}\begin{bmatrix}
    \tilde{\mathbf{H}}^H       \\
    -\mathbf{s}^H           
\end{bmatrix}\right)\\
&=\norm{\tilde{\mathbf{H}}\mathbf{U^H}-\mathbf{s}\mathbf{u^H}}_2^2\\
&=\norm{\tilde{\mathbf{H}}\mathbf{U^H}-(\tilde{\mathbf{H}}\mathbf{x}_{opt}-\mathbf{g})\mathbf{u^H}}_2^2\\
&\geq\norm{\tilde{\mathbf{H}}(\mathbf{U^H}-\mathbf{x}_{opt}\mathbf{u^H})}_2^2-\norm{\mathbf{g}\mathbf{u^H}}_2^2\\
&\geq\norm{\Delta\mathbf{x}_u}_2^2|\lambda_{min}(\tilde{\mathbf{H}}\tilde{\mathbf{H}}^H)|-\delta_e.
\end{align*}
As a result, we can derive that $\norm{\Delta\mathbf{x}_u}_2^2\leq\frac{2\delta_e}{|\lambda_{min}({\mathbf{H}}{\mathbf{H}}^H)|}$, note that $\tilde{\mathbf{H}}\tilde{\mathbf{H}}^H={\mathbf{H}}\bar{\mathbf{P}}\bar{\mathbf{Q}}\bar{\mathbf{Q}}^H\bar{\mathbf{P}}^H{\mathbf{H}}^H=\mathbf{H}\mathbf{H}^H$, and $\mathbf{H}=\text{diag}\{\mathbf{H}_1,\cdots,\mathbf{H}_{N_c}\}$.
\end{proof}

\section*{Appendix B\\Proof of Lemma \ref{lemma_deter}}
\begin{proof}
The constraint of (\ref{con_s_procedure}) is equivalent to 
\begin{align}\nonumber
&\text{Tr}\left((\hat{\mathbf{H}}_s+\Delta \mathbf{H}_s)\mathbf{W}_x(\hat{\mathbf{H}}_s+\Delta \mathbf{H}_s)^H\right)\\\nonumber
=&\text{Tr}\left(\hat{\mathbf{H}}_s\mathbf{W}_x\hat{\mathbf{H}}_s^H\right)+\text{Tr}\left(\Delta \mathbf{H}_s\mathbf{W}_x\Delta \mathbf{H}_s^H\right)\\\nonumber
&\quad\quad\quad\quad\quad\quad\quad\quad\quad\quad+2\text{Re}\left(\text{Tr}(\hat{\mathbf{H}}_s\mathbf{W}_x\Delta\mathbf{H}_s^H)\right)\\\nonumber
=&\text{Tr}\left(\hat{\mathbf{H}}_s\mathbf{W}_x\hat{\mathbf{H}}_s^H\right)+\Delta\mathbf{h}_s^H(\mathbf{I}_{N_s}\otimes\mathbf{W}_x)\Delta\mathbf{h}_s\\
&\quad\quad\quad\quad\quad\quad\quad\quad\quad\quad+2\text{Re}\left(\text{vec}(\hat{\mathbf{H}}_s\mathbf{W}_x)^H\Delta\mathbf{h}_s)\right)\leq\delta_e\label{sdphwh},
\end{align}
where $\Delta\mathbf{h}_s=\text{vec}(\Delta \mathbf{H}_s)$. Here we use the properties of the trace operator $\text{Tr}(\mathbf{A}\mathbf{B}\mathbf{A}^H)=\text{vec}(\mathbf{A})(\mathbf{I}\otimes\mathbf{B})\text{vec}(\mathbf{A})^H$ and $\text{Tr}(\mathbf{A}\mathbf{B}^H)=\text{vec}(\mathbf{A})\text{vec}(\mathbf{B})^H$ \cite{zhang2004matrix}. Due to the property of the column-by-column matrix vectorization and (\ref{upsilon_d}), we have 
\begin{equation}
\norm{\Delta\mathbf{h}_s}_2^2\leq\epsilon_h^2\label{deltah}
\end{equation}
from (\ref{upsilon_d}). In order to further derive the precoding error constraint under the scenario of channel uncertainty, we adapt a S-procedure lemma from  \cite{boyd2004convex} to find the alternative of quadratic inequalities (\ref{sdphwh}) and (\ref{deltah}).
\begin{lemma}
Consider $\mathbf{A}_1,\mathbf{A}_2\in\mathbb{H}^{n\times n}$, $\mathbf{b}_1,\mathbf{b}_2\in\mathbb{C}^{n}$, $c_1,c_2\in\mathbb{R}$, and suppose there exists an $\mathbf{u}_0$ with \cite{boyd2004convex}
\begin{equation}
\mathbf{u}_0^H\mathbf{A}_2\mathbf{u}_0+2\text{Re}(\mathbf{b}_2^H\mathbf{u}_0)+c_2<0\nonumber.
\end{equation}
Then, the inequality 
\begin{align}\nonumber
&\mathbf{u}^H\mathbf{A}_1\mathbf{u}+2\text{Re}(\mathbf{b}_1^H\mathbf{u})+c_1\geq 0,\\
\forall\mathbf{u}\in&\Phi_u:\left\{\mathbf{u}|\mathbf{u}^H\mathbf{A}_2\mathbf{u}+2\text{Re}(\mathbf{b}_2^H\mathbf{u})+c_2\leq 0\right\}
\end{align}
holds if and only if there exists $\lambda\geq 0$ such that
\begin{equation}
\begin{bmatrix}
    \mathbf{A}_1+\lambda\mathbf{A}_2 &  \mathbf{b}_1+\mathbf{b}_2    \\
    (\mathbf{b}_1+\mathbf{b}_2)^H       &  c_1+\lambda c_2
\end{bmatrix}
\succeq \mathbf{0}.
\end{equation}\label{lemmasp}
\end{lemma}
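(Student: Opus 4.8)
The statement is the classical S-procedure (the ``S-lemma'') for a single quadratic constraint, so the plan is to carry out the standard two-step argument: an elementary sufficiency direction and a necessity direction resting on convexity and separation in $\mathbb{R}^2$. Throughout, abbreviate $f_i(\mathbf{u})\triangleq\mathbf{u}^H\mathbf{A}_i\mathbf{u}+2\Re(\mathbf{b}_i^H\mathbf{u})+c_i$, set $\tilde{\mathbf{u}}\triangleq[\mathbf{u}^H\;1]^H$, and let $\tilde{\mathbf{A}}_i$ denote the $(n+1)\times(n+1)$ Hermitian matrix with diagonal blocks $\mathbf{A}_i$ and $c_i$ and off-diagonal block $\mathbf{b}_i$, so that $f_i(\mathbf{u})=\tilde{\mathbf{u}}^H\tilde{\mathbf{A}}_i\tilde{\mathbf{u}}$ and the matrix inequality in the statement is just $\tilde{\mathbf{A}}_1+\lambda\tilde{\mathbf{A}}_2\succeq\mathbf{0}$, equivalently $f_1(\mathbf{u})+\lambda f_2(\mathbf{u})\geq 0$ for all $\mathbf{u}$. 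Sufficiency is then immediate: if such a $\lambda\geq 0$ exists, then for every $\mathbf{u}\in\Phi_u$ (so $f_2(\mathbf{u})\leq 0$) one gets $f_1(\mathbf{u})\geq-\lambda f_2(\mathbf{u})\geq 0$; this half uses neither $\mathbf{u}_0$ nor the complex structure.

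For necessity, assume $f_1\geq 0$ on $\Phi_u$ and $f_2(\mathbf{u}_0)<0$. First I would promote the hypothesis to the homogeneous statement ``$\mathbf{z}^H\tilde{\mathbf{A}}_1\mathbf{z}\geq 0$ whenever $\mathbf{z}^H\tilde{\mathbf{A}}_2\mathbf{z}\leq 0$'' for \emph{all} $\mathbf{z}\in\mathbb{C}^{n+1}$: for $\mathbf{z}=(\mathbf{u},t)$ with $t\neq 0$ this is exactly $f_1\geq 0$ on $\Phi_u$ after rescaling, and the locus $t=0$ is obtained as a limit of such rescaled feasible points --- the single point where the Slater vector $\mathbf{u}_0$ is consumed. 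Next I would invoke that the joint range $R\triangleq\{(\mathbf{z}^H\tilde{\mathbf{A}}_1\mathbf{z},\ \mathbf{z}^H\tilde{\mathbf{A}}_2\mathbf{z}):\mathbf{z}\in\mathbb{C}^{n+1}\}$ is a convex cone in $\mathbb{R}^2$ (the Toeplitz--Hausdorff-type fact for a pair of Hermitian forms; equivalently $R$ coincides with the linear image $\{(\text{Tr}(\tilde{\mathbf{A}}_1\mathbf{Z}),\text{Tr}(\tilde{\mathbf{A}}_2\mathbf{Z})):\mathbf{Z}\succeq\mathbf{0}\}$ of the PSD cone). By the promoted hypothesis $R$ is disjoint from the convex cone $\{(x,y):x<0,\ y\leq 0\}$, while $R$ meets $\{y<0\}$ by Slater; separating these two cones by a line through the origin produces $\mu,\nu\geq 0$, not both zero, with $\mu\,\mathbf{z}^H\tilde{\mathbf{A}}_1\mathbf{z}+\nu\,\mathbf{z}^H\tilde{\mathbf{A}}_2\mathbf{z}\geq 0$ for every $\mathbf{z}$, and necessarily $\mu>0$ (else $\tilde{\mathbf{A}}_2\succeq\mathbf{0}$, contradicting $f_2(\mathbf{u}_0)<0$); setting $\lambda\triangleq\nu/\mu\geq 0$ then yields $\tilde{\mathbf{A}}_1+\lambda\tilde{\mathbf{A}}_2\succeq\mathbf{0}$, which is the asserted LMI.

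I expect the crux to be two things. One is the convexity of the joint range of the pair of Hermitian forms: it is exactly what makes a single separating line --- hence a single multiplier $\lambda$ --- available, and it is the property that fails for three or more quadratic forms, so it is where the ``one constraint'' hypothesis is really used. The other is the homogenization-closure step that extends the hypothesis from the slice $t=1$ to $t=0$, which is the only use of strict feasibility at $\mathbf{u}_0$. The block-matrix bookkeeping is routine; since the S-lemma is a textbook fact, one may alternatively simply cite \cite{boyd2004convex} rather than reproduce this argument.
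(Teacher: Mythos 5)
The paper itself gives no proof of this lemma: it is stated inside Appendix~B as a known result imported from \cite{boyd2004convex}, so your write-up is not an alternative to an in-paper argument but a proof of the cited fact. As a proof of the standard (complex) S-lemma it is essentially correct: sufficiency is the one-line multiplier bound, and necessity follows from (i) promoting the hypothesis to the homogeneous statement on $\mathbb{C}^{n+1}$ using the strictly feasible point $\mathbf{u}_0$, and (ii) convexity of the joint range of two Hermitian forms together with separation of disjoint convex cones in $\mathbb{R}^2$, with the Slater point also forcing the coefficient of $\tilde{\mathbf{A}}_1$ to be strictly positive so that one may normalize to a single $\lambda\geq 0$. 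Two caveats. First, the $t=0$ promotion needs one more line than you give: for $\mathbf{v}$ with $\mathbf{v}^H\mathbf{A}_2\mathbf{v}=0$ the ray $\mathbf{u}_0+s\mathbf{v}$ need not remain in $\Phi_u$ if $\Re\{\mathbf{v}^H(\mathbf{A}_2\mathbf{u}_0+\mathbf{b}_2)\}>0$; replacing $\mathbf{v}$ by $e^{\jmath\phi}\mathbf{v}$ (which changes neither $\mathbf{v}^H\mathbf{A}_1\mathbf{v}$ nor $\mathbf{v}^H\mathbf{A}_2\mathbf{v}$) makes that cross term nonpositive, after which dividing $f_1(\mathbf{u}_0+se^{\jmath\phi}\mathbf{v})\geq 0$ by $s^2$ and letting $s\to\infty$ gives the claim; also note that the convexity of the joint range (Toeplitz--Hausdorff type) is the genuinely nontrivial external ingredient of your argument, of the same cite-a-textbook character as the paper's own citation of the S-procedure. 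Second, what your argument establishes is $\tilde{\mathbf{A}}_1+\lambda\tilde{\mathbf{A}}_2\succeq\mathbf{0}$, whose off-diagonal block is $\mathbf{b}_1+\lambda\mathbf{b}_2$; the lemma as printed has $\mathbf{b}_1+\mathbf{b}_2$, which is a typo relative to the standard statement and for which the claimed equivalence fails in general. Your version is the correct one, and the discrepancy is immaterial to the paper's application in (\ref{lmi}), since there the uncertainty-ball constraint has $\mathbf{b}_2=\mathbf{0}$.
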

Since $\mathbf{W}_x=\tilde{\mathbf{X}}|_{t=1}$, by applying Lemma \ref{lemmasp}, there exists $\lambda\geq 0$ that the quadratic inequality constraints with considering the channel uncertainty in (\ref{sdphwh}) and (\ref{deltah}) can reformulated as
\begin{equation}
\begin{bmatrix}
    -\mathbf{I}_{N_s}\otimes\tilde{\mathbf{X}}+\lambda\mathbf{I}_{Ns(N_x+1)} &  -\text{vec}(\hat{\mathbf{H}}_s\tilde{\mathbf{X}})    \\
    -\text{vec}(\hat{\mathbf{H}}_s\tilde{\mathbf{X}})^H       &  \delta_e-\text{Tr}\left(\hat{\mathbf{H}}_s\tilde{\mathbf{X}}\hat{\mathbf{H}}_s^H\right)-\lambda\epsilon_h^2
\end{bmatrix}
\succeq \mathbf{0}\label{lmi},
\end{equation}
where $\lambda$ is a variable which needs to be found in the optimization. By applying Schur's complement \cite{horn2012matrix}, the equation (\ref{lmi}) can be further represented as
\begin{align}\nonumber
-\text{vec}(\hat{\mathbf{H}}_s\tilde{\mathbf{X}})^H&\left(-\mathbf{I}_{N_s}\otimes\tilde{\mathbf{X}}+\lambda\mathbf{I}_{Ns(N_x+1)}\right)^{-1}\text{vec}(\hat{\mathbf{H}}_s\tilde{\mathbf{X}})\\
&+\delta_e-\lambda\epsilon_h^2-\text{Tr}\left(\hat{\mathbf{H}}_s\tilde{\mathbf{X}}\hat{\mathbf{H}}_s^H\right)\geq 0.
\end{align}
Thus we have
\begin{align}\nonumber
\delta_e-\lambda\epsilon_h^2-&\text{Tr}\left(\tilde{\mathbf{X}}\hat{\mathbf{H}}_s^H\hat{\mathbf{H}}_s\right)\\
&-\text{Tr}\left(\tilde{\mathbf{X}}(-\tilde{\mathbf{X}}+\lambda\mathbf{I}_{N_x+1})^{-1} \tilde{\mathbf{X}}\hat{\mathbf{H}}_s^H\hat{\mathbf{H}}_s\right)\geq 0,
\end{align}
which can be relaxed as (\ref{x-y}) and (\ref{succ_y}).
\end{proof}

\section*{Appendix C\\Proof of Lemma \ref{lemma_statistical}}
\begin{proof}
Since $\text{vec}(\Delta\tilde{\mathbf{H}})\sim \mathcal{CN}(\mathbf{0},\mathbf{R}_{\epsilon})$, we have that $\Delta\mathbf{h}_s\sim \mathcal{CN}(\mathbf{0},\mathbf{R}_{\epsilon}')$, where $\mathbf{R}_{\epsilon}'=[\mathbf{R}_{\epsilon},\mathbf{0}]$. Through the following conversion:
\begin{equation}
\Delta\mathbf{h}_s=\mathbf{R}_{\epsilon}^{'1/2}\mathbf{e}_h,
\end{equation}
where $\mathbf{e}_h\sim \mathcal{CN}(\mathbf{0},\mathbf{I})$, the entries of $\mathbf{R}_{\epsilon}^{'1/2}$ are the square root of the entries of $\mathbf{R}_{\epsilon}'$. The probabilistic inequality (\ref{bti}) can be transformed as
\begin{align}\nonumber
\text{Pr}\Big(\mathbf{e}_h^H\mathbf{W}_{\epsilon}\mathbf{e}_h+2\text{Re}\left(\mathbf{g}_{\epsilon}^H\mathbf{e}_h\right)+\delta_e-\text{Tr}\left(\hat{\mathbf{H}}_s\tilde{\mathbf{X}}\hat{\mathbf{H}}_s^H\right)&\geq 0\Big)\\\nonumber
&\geq 1-\gamma,
\end{align}
where we defined $\mathbf{W}_{\epsilon}$ and $\mathbf{g}_{\epsilon}$ as follows
\begin{align}
\mathbf{W}_{\epsilon}&\triangleq-{\mathbf{R}_{\epsilon}^{'1/2}}(\mathbf{I}_{N_s}\otimes\mathbf{W}_x)\mathbf{R}_{\epsilon}^{'1/2}\\
\mathbf{g}_{\epsilon}&\triangleq-{\mathbf{R}_{\epsilon}^{'1/2}}\text{vec}(\hat{\mathbf{H}}_s\mathbf{W}_x),
\end{align}
since $\mathbf{R}_{\epsilon}'$ is a Hermitian matrix, note that $\mathbf{W}_x=\tilde{\mathbf{X}}|_{t=1}$.
\begin{lemma}\label{lemma_bti}
Consider $\mathbf{W}\in\mathbb{H}^{n\times n}$, $\mathbf{g}\in\mathbb{C}^{n}$, $c\in\mathbb{R}$, $\mathbf{u}\sim\mathcal{CN}(\mathbf{0},\mathbf{I})$ and $\gamma\in(0,1]$ and define $\theta\triangleq-\log(\gamma)$ \cite{bechar2009bernstein}. The following condition 
\begin{equation}
\text{Pr}\left\{\mathbf{u}^H\mathbf{W}\mathbf{u}+2\text{Re}(\mathbf{g}^H\mathbf{u})+c\geq 0\right\}\geq 1-\gamma
\end{equation}
is satisfied if and only if all of the following inequalities hold:
\end{lemma}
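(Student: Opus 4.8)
The plan is to produce the listed constraints as a lossless slack-variable reformulation of a single deterministic ``safe'' inequality obtained from the Bernstein-type concentration bound of \cite{bechar2009bernstein} for quadratic forms in complex Gaussian vectors. First I would record that, for $\mathbf{u}\sim\mathcal{CN}(\mathbf{0},\mathbf{I})$ and $\mathbf{W}\in\mathbb{H}^{n\times n}$, the scalar random variable $Q\triangleq\mathbf{u}^H\mathbf{W}\mathbf{u}+2\Re(\mathbf{g}^H\mathbf{u})$ has mean $\textbf{E}(Q)=\text{Tr}(\mathbf{W})$ (since $\textbf{E}(\mathbf{u})=\mathbf{0}$), and that the quantities governing its fluctuation are the Frobenius-type term $\sqrt{\norm{\mathbf{W}}_F^2+2\norm{\mathbf{g}}_2^2}$ and the one-sided spectral term $s^{+}(\mathbf{W})\triangleq\max\{\lambda_{\max}(-\mathbf{W}),0\}$.

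Next I would invoke the Bernstein-type inequality in its lower-tail form: for every $t>0$,
\[
\text{Pr}\left\{Q\leq\text{Tr}(\mathbf{W})-\sqrt{2t}\,\sqrt{\norm{\mathbf{W}}_F^2+2\norm{\mathbf{g}}_2^2}-t\,s^{+}(\mathbf{W})\right\}\leq e^{-t}.
\]
Choosing $t=\theta=-\log(\gamma)$ makes the right-hand side equal to $\gamma$, so $\text{Pr}\{Q+c\geq 0\}\geq 1-\gamma$ is guaranteed as soon as the deterministic inequality
\[
\text{Tr}(\mathbf{W})-\sqrt{2\theta}\,\sqrt{\norm{\mathbf{W}}_F^2+2\norm{\mathbf{g}}_2^2}-\theta\,s^{+}(\mathbf{W})+c\geq 0
\]
holds; denote it $(\star)$. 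I would then linearize $(\star)$: introduce a slack $x\geq 0$ with $\sqrt{\norm{\mathbf{W}}_F^2+2\norm{\mathbf{g}}_2^2}\leq x$, a second-order cone constraint, and a slack $y$ with $s^{+}(\mathbf{W})\leq y$, which by the variational characterization of the largest eigenvalue is exactly the linear matrix inequality $y\mathbf{I}+\mathbf{W}\succeq\mathbf{0}$ together with $y\geq 0$ (the extra $y\geq 0$ is needed when $\mathbf{W}\succ\mathbf{0}$, so that $y\geq s^{+}(\mathbf{W})$ rather than merely $y\geq\lambda_{\max}(-\mathbf{W})$). Since the left side of $(\star)$ is nonincreasing in both fluctuation terms, replacing them by $x$ and $y$ yields the linear inequality $\text{Tr}(\mathbf{W})-\sqrt{2\theta}\,x-\theta\,y+c\geq 0$, and the four convex constraints together imply $(\star)$, hence the chance constraint. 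Conversely, $(\star)$ is precisely the feasibility of this system at the tightest choices $x=\sqrt{\norm{\mathbf{W}}_F^2+2\norm{\mathbf{g}}_2^2}$ and $y=s^{+}(\mathbf{W})$, so the existence of admissible $(x,y)$ is equivalent to $(\star)$, which is the sense in which the ``if and only if'' is to be read. Applying this with $\mathbf{W}=\mathbf{W}_{\epsilon}$, $\mathbf{g}=\mathbf{g}_{\epsilon}$ and $c=\delta_e-\text{Tr}(\hat{\mathbf{H}}_s\tilde{\mathbf{X}}\hat{\mathbf{H}}_s^H)$, and expanding $\text{Tr}(\mathbf{W}_{\epsilon})$, $\norm{\mathbf{W}_{\epsilon}}_F$ and $s^{+}(\mathbf{W}_{\epsilon})$ through $\mathbf{W}_{\epsilon}=-\mathbf{R}_{\epsilon}^{'1/2}(\mathbf{I}_{N_s}\otimes\mathbf{W}_x)\mathbf{R}_{\epsilon}^{'1/2}$, then reproduces the system (\ref{bti_c_1})--(\ref{bti_c_4}) used for \textbf{P8}, with $(x,y)$ in the role of $(\lambda_1,\lambda_2)$.

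The step I expect to be the main obstacle is conceptual rather than computational: the Bernstein bound is only a conservative sufficient condition for the probabilistic constraint, so the ``if and only if'' must be understood as an equivalence between the SDP-representable system and the safe bound $(\star)$, not with the raw chance constraint in (\ref{bti}); phrasing this without overclaiming is the delicate part. The remainder is bookkeeping: tracking the complex-Gaussian normalization — the factor $2$ multiplying $\norm{\mathbf{g}}_2^2$ and the scaling of $\mathbf{W}$ come from the real and imaginary parts of $\mathbf{u}$ each carrying variance $\tfrac12$ — and confirming that the cone constraint (\ref{bti_c_2}) and the LMI (\ref{bti_c_3}) are genuinely convex so that \textbf{P8} remains solvable by the interior-point method of \cite{boyd2004convex}.
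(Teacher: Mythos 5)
Your proposal is correct, and it supplies what the paper leaves implicit: the paper states Lemma \ref{lemma_bti} without proof, citing \cite{bechar2009bernstein}, and your argument is exactly the derivation that citation stands for --- the lower-tail Bernstein-type concentration bound for $\mathbf{u}^H\mathbf{W}\mathbf{u}+2\Re(\mathbf{g}^H\mathbf{u})$ (mean $\mathrm{Tr}(\mathbf{W})$, Frobenius-type and one-sided spectral fluctuation terms) evaluated at $t=\theta=-\log\gamma$, followed by the standard slack-variable representation of the two fluctuation terms as an SOC constraint and an LMI, with $(\lambda_1,\lambda_2)$ in the role of your $(x,y)$. Your central caveat is also well taken and is in fact sharper than the paper's wording: the Bernstein bound gives only a sufficient (conservative) restriction of the chance constraint, so the ``if and only if'' can be read only as the exact equivalence between feasibility of the four-constraint system in $(\lambda_1,\lambda_2)$ and your deterministic safe inequality $(\star)$, not as an equivalence with the raw probabilistic constraint (\ref{bti}); this is precisely how the result is deployed in the outage-constrained robust beamforming literature the paper follows, and your proof of the ``if'' direction plus the tight-slack argument is all that can be claimed. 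Two small reconciliation points with the printed constraints: your (correct) expansion yields $-\mathrm{Tr}\bigl(\mathbf{R}'_{\epsilon}(\mathbf{I}_{N_s}\otimes\tilde{\mathbf{X}})\bigr)$ and a coefficient $\sqrt{2\theta}$, whereas (\ref{bti_c_1}) prints a positive trace term and $\sqrt{-2\theta}$ (which is imaginary since $\theta\geq 0$); both appear to be typographical in the paper, because $\mathbf{W}_{\epsilon}=-\mathbf{R}_{\epsilon}^{'1/2}(\mathbf{I}_{N_s}\otimes\mathbf{W}_x)\mathbf{R}_{\epsilon}^{'1/2}$, so the signs you use are the ones that make the lemma and its application to \textbf{P8} consistent.
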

\begin{subequations}
\begin{align}
&\text{Tr}(\mathbf{W})-\sqrt{-2\theta}\lambda_1-\theta \lambda_2+c\geq 0\\
&\sqrt{\norm{\mathbf{W}}_2^2+2\norm{\mathbf{g}}_2^2}\leq \lambda_1\\
&\lambda_2\mathbf{I}+\mathbf{W}\succeq\mathbf{0}\\
&\lambda_2\geq 0,
\end{align}
\end{subequations}
where $\lambda_1,\lambda_2\in\mathbb{R}^+$ are slack variables.
By applying Lemma \ref{lemma_bti}, we obtain the constraints of (\ref{bti_c_1})-(\ref{bti_c_4}).
\end{proof}

\bibliographystyle{IEEEtran}
% argument is your BibTeX string definitions and bibliography database(s)
%\bibliography{IEEEabrv,../bib/paper}
%
% <OR> manually copy in the resultant .bbl file
% set second argument of \begin to the number of references
% (used to reserve space for the reference number labels box)

% biography section
% 
% If you have an EPS/PDF photo (graphicx package needed) extra braces are
% needed around the contents of the optional argument to biography to prevent
% the LaTeX parser from getting confused when it sees the complicated
% \includegraphics command within an optional argument. (You could create
% your own custom macro containing the \includegraphics command to make things
% simpler here.)
%\begin{biography}[{\includegraphics[width=1in,height=1.25in,clip,keepaspectratio]{mshell}}]{Michael Shell}
% or if you just want to reserve a space for a photo:

% insert where needed to balance the two columns on the last page with
% biographies
%\newpage

% You can push biographies down or up by placing
% a \vfill before or after them. The appropriate
% use of \vfill depends on what kind of text is
% on the last page and whether or not the columns
% are being equalized.

%\vfill

% Can be used to pull up biographies so that the bottom of the last one
% is flush with the other column.
%\enlargethispage{-5in}
\bibliographystyle{ieeetr}
\bibliography{myreference}

% that's all folks
\end{document}